\newtheorem{theorem}{Theorem}
\newtheorem{lemma}[theorem]{Lemma}
\newtheorem{definition}[theorem]{Definition}
\newtheorem{observation}[theorem]{Observation}
\newtheorem{claim}[theorem]{Claim}
\newenvironment{proofclaim}{\begin{trivlist}
    \item[\hskip\labelsep {\it Proof of Claim}.]}{\QED \end{trivlist}}
\newcommand{\QED}{\hfill $\square$}
\newcommand{\hide}[1]{
}
\newcommand{\OPT}{\mathsf{OPT}}
\newcommand{\argmax}{\mathrm{argmax}}
\newcommand{\argmin}{\mathrm{argmin}}
\newcommand{\LP}{\mathrm{LP}}
\newcommand{\negskip}{\vspace{-6pt}}
\newcommand{\ite}[1]{\mbox{(\textbf{#1})}}
\newcommand{\NL}{N^{L+}}
\newcommand{\NN}{N^+}
\newcommand{\calC}{\mathcal{C}}
\newcommand{\calF}{\mathcal{F}}
\newcommand{\zerol}{\{0,L\}}
\renewcommand{\subset}{\subseteq}
\begin{document}
\title{Centrality of Trees for Capacitated $k$-Center}
\author{
Hyung-Chan An \\ EPFL, Switzerland \\  hyung-chan.an@epfl.ch
\and
Aditya Bhaskara \\EPFL, Switzerland \\  aditya.bhaskara@epfl.ch
\and 
Ola Svensson \\ EPFL, Switzerland \\  ola.svensson@epfl.ch
}
\date{}
\maketitle 
\thispagestyle{empty}

\vspace{1ex}

\begin{abstract}
There is a large discrepancy in our understanding of uncapacitated and capacitated versions of
network location problems. This is perhaps best illustrated by the classical $k$-center problem:
there is a simple tight $2$-approximation algorithm for the uncapacitated version whereas the first
constant factor approximation algorithm for the general version with capacities was only recently
obtained by using an intricate rounding algorithm that achieves an approximation guarantee in the
hundreds.

Our paper aims to bridge this discrepancy. For the capacitated $k$-center problem, we give a simple
algorithm with a clean analysis that allows us to prove an approximation guarantee of $9$. It uses
the standard LP relaxation and comes close to settling the integrality gap (after necessary preprocessing),
which is narrowed down to  either $7,8$ or $9$. The algorithm proceeds by first reducing to special {\em tree instances}, and then solves such instances optimally. Our concept of tree instances is quite
versatile, and applies to natural variants of the capacitated $k$-center problem for which we also
obtain improved algorithms. Finally, we give evidence to show that more powerful preprocessing could lead to better algorithms, by giving an approximation
algorithm that beats the integrality gap for instances where all non-zero capacities are uniform.

\vspace{5ex}

\textbf{Keywords:} approximation algorithms, capacitated network location problems, capacitated $k$-center problem, LP-rounding algorithms.

\end{abstract}

\newpage
\setcounter{page}{1}

\section{Introduction}\label{sec:intro}
Network location problems lie at the heart of combinatorial optimization. The question of
study is how to select centers so as to best serve a given set of clients located in a metric space. One
can imagine several objective functions to measure the quality of service. Perhaps the most natural
and well-studied ones are ``social welfare'', where we wish to \emph{minimize the average}
distance from a client to its assigned center, and ``fairness'', in which we wish to \emph{minimize
the  maximum} distance from a client to its assigned center. Note that, once we have selected
the centers, both these objectives are minimized by assigning each client to its closest center.
An inherent drawback of this strategy, however, is that it is unable to deal with  centers of
(different) capacities that limit the amount of clients they can serve, which is a constraint present
in most conceivable applications. In fact, these innocent looking capacity contraints have troubled
researchers for decades and they have a much bigger impact on our understanding than the choice of
objective function.

For uncapacitated network location problems, several beautiful algorithmic techniques, such as
LP-rounding~\cite{CharikarGTS02}, primal-dual framework~\cite{JV01} and local
search~\cite{KorupoluPR00,CharikarG05} have been used to obtain a fine-grained understanding of
the approximability of  the
classic variants: $k$-center, $k$-median, and facility location\footnote{Recall that in $k$-center
  and $k$-median, we wish to select $k$ centers so as to minimize the fairness and social
  welfare, respectively; facility location is similar to $k$-median but instead of having a
  constraint $k$ on the number of centers to open, each center has an opening cost.}. Already in the
80's, Gonzales~\cite{Gonzalez85} and Hochbaum \& Shmoys~\cite{HS85} developed tight
$2$-approximation algorithms for the $k$-center problem. For facility location, the current best
approximation algorithm is due to Li~\cite{Li11}. He combined an algorithm by Byrka~\cite{Byrka07}
and an algorithm by Jain, Mahdian, and Saberi~\cite{JainMS02} to achieve an approximation guarantee of $1.488$.
This is nearly tight, as it is hard to approximate the problem within a factor of
$1.463$~\cite{GuhaK99}. The gap is slightly larger for $k$-median: a recent LP rounding~\cite{LS12}
achieves an approximation guarantee of $1+\sqrt{3} \approx 2.732$ improving upon a local search algorithm by Arya et
al.~\cite{AryaGKMMP04}; and it is NP-hard to do better than $1+2/e \approx 1.736$~\cite{JainMS02}. Although the
different problems have algorithms with  different approximation guarantees, they share many techniques, and improvements have often come hand in hand. In particular,  most of the  above progress relies on standard linear programming (LP) relaxations.

In contrast, the standard LP relaxation fails to give any guarantees for capacitated network location problems leading to a much coarser understanding. Apart from special cases, such as
uniform capacities~\cite{KhullerS00}, soft capacities (a center can be opened several
times)~\cite{ShmoysTA97,KhullerS00,JV01}, and other variants~\cite{LeviSS04,ChuzhoyR05}, the only
known constant factor approximation algorithm until recently, was for facility location. In a
sequence of works, including Korupolu, Plaxton \& Rajaraman~\cite{KorupoluPR00}, P{\'a}l, Tardos \& Wexler~\cite{PalTW01}, and Chudak \& Williamson~\cite{ChudakW05}, increasingly enhanced local search algorithms culminated in an
approximation guarantee of 5~\cite{BansalGG12}. Their methods are elegant but specialized to
facility location and are not LP-based. In fact, finding a relaxation-based algorithm for
capacitated facility location with a constant approximation guarantee remains a major open problem
(see e.g. ``Problem 5'' of the ten open problems from the recent book by Williamson and Shmoys~\cite{WS11}). One of the
motivations for finding algorithms based on relaxations is that those methods are often flexible and the
developed techniques transfer to different settings, as has indeed been the case in the study of
uncapacitated location problems.

In the quest to obtain a better understanding and more general (relaxation based) techniques for
capacitated network location problems, it is natural to start with the capacitated $k$-center
problem. Indeed, even though we have a good understanding of uncapacitated location problems in
general, the uncapacitated $k$-center problem stands out, with an extremely simple greedy algorithm that gives a
tight analysis of the LP relaxation. Our failure to understand the capacitated $k$-center
problem is therefore solely due to the lack of techniques for analyzing capacity constraints.
An important recent development in this line of research is due to Cygan, Hajiaghayi and
Khuller~\cite{CyganHK12}, who obtain the first constant factor approximation for the capacitated
$k$-center problem. Their algorithm works by preprocessing the instance to overcome the unbounded
integrality gap of the natural LP relaxation, followed by an intricate rounding procedure. The
approximation factor is not computed explicitly, but is estimated to be roughly in the hundreds.
This however, is still quite far off from the integrality gap of $7$ (after preprocessing)~\cite{CyganHK12} and the
inapproximability results which rule out a factor better than $3$~ (see e.g.~\cite{CyganHK12} for a simple proof).

In this paper, we develop novel techniques to further close the gap in our understanding of
capacitated location problems. In particular, we present a simple algorithm for the capacitated
$k$-center problem with a clean analysis that allows us to prove an approximation guarantee of $9$.
Our result is based on the standard LP relaxation and it almost settles its integrality gap (after
the preprocessing of Cygan et al.~\cite{CyganHK12}): it is either $7,8$ or
$9$ (both the integrality gap and approximation ratio can only take integral values; this is because the worst instances can easily be seen to be ones defined by the shortest-path metric on an unweighted graph). We next describe this and our other
results in greater detail. Due to the simplicity of our analyses, we hope that some of the ideas could
be applied to other location problems, such as capacitated $k$-median, for which no constant factor
approximation algorithms are known.

\paragraph{Our main results and proof outline.}
Our main algorithmic result is the following.
\begin{theorem}
There exists a $9$-approximation algorithm for the capacitated $k$-center problem.
\end{theorem}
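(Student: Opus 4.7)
The plan is to combine the natural threshold-LP approach with a two-stage reduction: first reduce a preprocessed instance to ``tree-structured'' ones, then solve the latter optimally. As is standard for $k$-center, I would enumerate the optimal radius $T$ from the set of pairwise distances and, for each candidate, write the natural assignment LP with variables $y_i\in[0,1]$ for opening facility $i$ and $x_{ij}\in[0,1]$ for assigning client $j$ to $i$, subject to $\sum_i y_i\le k$, $\sum_i x_{ij}=1$, $x_{ij}\le y_i$, $\sum_j x_{ij}\le L_i y_i$, and $x_{ij}=0$ whenever $d(i,j)>T$. Since this LP has unbounded integrality gap on raw instances, I would apply the Cygan--Hajiaghayi--Khuller preprocessing so that I may assume a bounded gap and focus on rounding, aiming to produce an integral solution of radius at most $9T$ whenever the LP is feasible.

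Next, I would transform the fractional solution into a simpler instance whose geometry is a forest. The idea is to pick a maximal set of ``cluster heads'' (facilities) that are pairwise at distance more than roughly $2T$, attach every other facility to its nearest head, and view the heads as nodes of an auxiliary graph in which two heads are adjacent when their attached facilities lie within $O(T)$ of each other. Breaking cycles in this graph---paying a bounded multiple of $T$ per contraction---produces a collection of \emph{tree instances}, each carrying a local budget on how many of its facilities can be opened and an aggregated capacity at each super-node. On a tree instance the combined opening-and-assignment problem can then be solved exactly by a bottom-up dynamic program whose state at each subtree records the number of facilities opened so far and the net surplus or deficit of demand that must flow across the edge to its parent; tree structure ensures that this two-dimensional DP is polynomial and that the integer optimum matches the LP optimum.

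The final client-to-facility distance then telescopes by the triangle inequality: one hop of length $T$ for a client's original fractional neighbor, a few hops of length at most $2T$ inside a cluster and across the tree edges of the decomposition, aggregating to the claimed $9T$. The main obstacle, I expect, is the tree-reduction step: controlling the radius loss while ensuring simultaneously that (i) the local budgets on the tree instances sum to $k$ and remain achievable, (ii) the aggregated capacities of merged super-nodes faithfully reflect the original capacities, and (iii) the output graph is genuinely a forest, is delicate---a naive contraction either blows up the radius or over-counts facility openings. A careful choice of which edges to keep, driven by how the LP mass flows between nearby clusters, is what should bring the final constant down to $9$ rather than a much larger number, and essentially match the known integrality-gap lower bound after preprocessing.
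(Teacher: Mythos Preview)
Your proposal is a research outline rather than a proof, and it diverges from the paper at the two technical steps that actually carry the argument.

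First, the reduction to trees. You propose to build a cluster graph and then ``break cycles\ldots\ paying a bounded multiple of $T$ per contraction,'' but you do not say how, and you yourself flag this as the main obstacle. The paper never breaks cycles: it uses the Khuller--Sussman clustering, which \emph{directly} yields a tree $U$ on cluster midpoints with adjacent midpoints at distance exactly $3$ in $G_{\le\tau}$. This is what keeps the constants tight. The paper then places one auxiliary vertex per cluster (a delegate for the highest-capacity vertex in the midpoint's neighborhood) and hangs the remaining fractionally-open vertices of each cluster as leaves. The crucial structural invariant is that every \emph{internal} node of the resulting tree instance has opening exactly $1$; this is forced by first aggregating one unit of opening inside each cluster (a distance-$1$ transfer). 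Your sketch has no analogue of this invariant, and without it the tree step has no leverage.

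Second, solving the tree instance. You propose a DP whose state is ``number of facilities opened so far and the net surplus or deficit of demand.'' With heterogeneous capacities this is not well defined: a single scalar ``deficit'' does not capture which capacities were lost where, and there is no obvious reason such a DP matches the LP or yields a bounded-radius assignment. The paper instead introduces the notion of a \emph{distance-$r$ transfer} (Definition~\ref{def:transfer}) and proves that every tree instance admits an \emph{integral distance-$2$ transfer} by a short recursive argument: at a node $r$ whose children are all leaves, commit to the $\lfloor Y\rfloor+1$ highest-capacity vertices, encode the residual decision as a single new leaf $p$ with $\bar y_p=Y-\lfloor Y\rfloor$ and $\bar L(p)=\min(L(r),L(v_{\lfloor Y\rfloor+1}))$, and recurse. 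The factor $9$ then falls out mechanically as $3r+3$ with $r=2$: one unit in and one unit out for the auxiliary vertices, a factor $3$ because adjacent tree nodes are at $G$-distance at most $3$, and a final $+1$ from Hall's condition (Lemma~\ref{lem:framework-main}). None of this accounting appears in your proposal.
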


Our algorithm takes a guess $\tau$ on the optimal solution value, and considers an unweighted graph $G_{\leq\tau}$ on the given set of vertices where two vertices are adjacent if and only if their distance is at most $\tau$: this graph represents which assignments are ``admissible'' with respect to $\tau$. We solve the standard LP on this graph, which can be assumed to be connected~\cite{CyganHK12}. This determines if it is possible to (fractionally) open $k$ vertices while assigning every vertex to a center that is adjacent in $G_{\leq\tau}$. If this LP is infeasible, we know that the optimum is worse than $\tau$; otherwise, our algorithm will find a solution where every vertex is assigned to a center that is within a distance of 9 in $G_{\leq\tau}$, leading to a 9-approximation algorithm.

The LP solution specifies a set of \emph{opening variables} that indicate the fraction to which each vertex is to be opened. Our algorithm rounds these opening variables by ``transferring'' openings between vertices to make them integral. Since we do not create any new opening, our rounding will naturally open at most $k$ centers; however, the challenge is to ensure that there exists a small-distance assignment of the vertices to open centers. If, for example, the opening of a vertex $v$ is transferred to another vertex that is far away, the clients that were originally assigned to $v$ may be unable to find an available center nearby. For another example, if the opening of a high-capacity vertex gets transferred to a low-capacity one, the low-capacity vertex may fail to provide sufficient capacity to cover the vertices in the neighborhood. Thus, we need to ensure that our rounding algorithm transfers openings only in small vicinity, and that ``locally available capacity'' of the graph does not decrease. (Definition~\ref{def:transfer} formalizes this concept as a \emph{distance-$r$ transfer}.)

We reduce the rounding problem to the special case of \emph{tree instances}, and present an algorithm that rounds such instances optimally. A tree instance is given by a set of opening variables defined on a rooted tree, where every non-leaf node has an opening variable of $1$. Tree instances are generalizations of caterpillars used by Cygan et al.~\cite{CyganHK12}, which can be considered as tree instances whose non-leaf nodes form a path and have certain degree bounds. Suppose we have a tree instance where the capacities are uniform and there are exactly two leaves $u$ and $v$ each of which is opened by $1/2$, whereas every other vertex is opened by $1$. If $u$ and $v$ are distant, this may appear problematic at a glance as we cannot transfer the opening of one to the other. However, there exists a (unique) path $u,w_1,\ldots,w_m,v$ in the tree, and we can transfer the opening of $1/2$ in a ``chain'' along this path: from $u$ to $w_1$, from $w_1$ to $w_2$, $\ldots$, from $w_m$ to $v$. This idea can in fact be carried through to give an algorithm for capacitated $k$-center when all capacities are equal.

Unfortunately, this chain of transfers causes a problem when the capacities are given arbitrarily: suppose in the previous example that $u$ and $v$ have very high capacities compared to the others. Then we will not be able to transfer the opening of $u$ to $w_1$, since the open centers around $u$ may not be able to provide sufficient capacity to cover the vertices that were originally assigned to $u$. However, from another angle, $w_1$ (or any other non-leaf vertex) is ``wasting'' the budget, since it opens a center while contributing relatively small capacity to the graph. This provides us some ``slack'' in the budget that we can utilize: in this particular example, by transferring an opening of $1/2$ \emph{from} $w_1$ \emph{to} $u$, and the other $1/2$ from $w_1$ to $v$ in a chain, we can successfully round the given instance thanks to the decision of closing $w_1$ which had originally had its opening variable equal to one. This strategy of \emph{closing a fully open center} is quite powerful, yet we need to ensure that its capacity can be accomodated by nearby centers if we want to close it. Thus, the viability of such a strategy tends to depend on several factors, including how its capacity compares to vertices in the neighborhood, which of these vertices are to be opened, and so on -- all decisions which could depend on more and more distant vertices.

In contrast, our algorithm departs from previous works by using a simple \emph{local} strategy that does not depend on distant vertices and applies to \emph{every} non-leaf node. The reason our strategy works locally is that the decision of closing fully open centers is determined using solutions to subinstances, which are solved recursively. This key idea significantly eases the analysis and leads to our optimal algorithm for tree instances. The simplicity of our analysis also helps us more carefully analyze the approximation ratio and extend our techniques to related problems. Section~\ref{sec:tree-routing-alg} formally presents our algorithm to round a tree instance; Appendix~\ref{sec:ext} presents the extensions to two related problems: the \emph{capacitated $k$-supplier problem} and the \emph{budgeted opening problem with uniform capacity}.

Section~\ref{sec:gen2tree} presents our reduction to tree instances. We construct a tree instance on a subset of vertices that are chosen as ``candidates'' to be opened. Non-leaf nodes will be carefully chosen, in order to yield a 9-approximation algorithm. Two adjacent vertices in the constructed tree instance will not necessarily be adjacent in the original graph, but will be in close proximity; hence, if the tree instance can be rounded using short transfers of openings, the original instance can also be rounded using only slightly longer transfers.

\paragraph{More results and future directions.}
In Section~\ref{sec:extfd}, we explore future directions towards a better understanding of the problem. Recall that our algorithm proceeds in three steps: firstly, we preprocess the given instance using the results of Cygan et al.~\cite{CyganHK12}; secondly, we reduce the problem to a tree instance; lastly, we solve this tree instance. Given that our tree rounding algorithm is best-possible, it is natural to seek to improve the first two steps.
The preprocessing step
of Cygan et al{\frenchspacing.} allows us to bring down the integrality gap from unbounded to $9$; however, the
integrality gap after the basic preprocessing is known to be at least $7$~\cite{CyganHK12}, which is
larger than the best known inapproximability result that rules out a better factor than $3$. The instance showing the
integrality gap of $7$ (and also that of the inapproximability result) has a special structure that every capacity is either $0$ or $L$ for some constant $L$. In order to understand the potential
of stronger preprocessing methods, we investigate this \emph{$\zerol$-case} and show that
additional preprocessing and a sophisticated rounding gives a $6$-approximation algorithm. The
interesting fact is that we obtain an approximation ratio which \emph{surpasses} the integrality gap lower bound of $7$ after
basic preprocessing. This raises the natural open question: could there be
preprocessing steps which bring the approximation ratio down to $3$? We could also
ask: do lift-and-project methods (applied to a potentially different formulation) automatically {\em
  capture} these preprocessing steps? We believe that understanding these questions would  also shed light
on approximating capacitated versions of other problems such as facility location and $k$-median.

\section{Preliminaries}\label{sec:prelim}
Given an integer $k$ and a metric distance/cost $c:V\times V\to\mathbb{R}_+$ on $V$ with a capacity function $L: V \to \mathbb{Z}_{\geq 0}$, the \emph{capacitated $k$-center problem} is to choose $k$ vertices to \emph{open}, along with an assignment of every vertex to an open center which minimizes the longest distance between a vertex and the center it is assigned to while honoring the capacity constraints: i.e., no open center $v$ is assigned more vertices than its capacity $L(v)$.

For an undirected graph $G=(V,E)$, $d_G(u,v)$ denotes the distance between $u,v\in V$; $N^+_G(u)$ denotes the set of vertices in the neighborhood of $u$, including $u$ itself: $N^+_G(u):=\{v\mid (u,v)\in E\}\cup\{u\}$. For $U\subset V$, $d_G(v,U)$ denotes the distance from $v$ to $U$: $d_G(v,U):=\min_{u\in U} d_G(v,u)$. $N^+_G(U)$ is a shorthand for $\cup_{u\in U}N^+_G(u)$. When the graph of interest $G$ is clear from the context, we will use $d$ and $N^+$ instead of $d_G$ and $N^+_G$, respectively. Let $\OPT$ denote the optimal solution value.

\paragraph{Reduction to an unweighted problem using the standard LP relaxation.} Our algorithm begins with determining a lower bound $\tau^*$ on the optimal solution value: it makes a guess $\tau$ at $\OPT$, and tries to decide if $\tau<\OPT$. We simplify this problem by considering an unweighted graph that represents which assignments are ``admissible''. Let $G_{\leq\tau}=(V,E_{\leq\tau})$ be the unweighted graph on $V$ (with loops on every vertex) where two vertices are adjacent if and only if their distance is at most $\tau$: $E_{\leq\tau} :=\{(u,v)\mid c(u,v)\leq \tau\}$. Note that a feasible solution of value $\tau$ assigns every vertex to a center that is adjacent in $G_{\leq\tau}$, and conversely, if a solution assigns every vertex to a center that is adjacent in $G_{\leq\tau}$, its value is no greater than $\tau$. For an unweighted graph $G=(V,E)$, the standard LP relaxation $\LP_k(G)$ is the following feasibility LP that fractionally verifies whether there exists a solution that assigns every vertex to an open center that is adjacent in $G$:
\begin{equation*}
\boxed{
\begin{array}{rcll}
            \displaystyle \sum_{u\in V}y_u &=&k             ;               &       \vspace{.5ex}\\
            \displaystyle x_{uv} &\leq& y_u                              , &     \forall u,v\in V ;\vspace{.5ex}\\
            \displaystyle \sum_{v: (u,v) \in E}x_{uv} &\leq& L(u)\cdot y_u  ,\ &     \forall u\in V ;\vspace{.5ex}\\
            \displaystyle \sum_{u: (u,v) \in E}x_{uv} &=&1                    , &     \forall v\in V ;\vspace{.5ex}\\
            \displaystyle 0 \  \leq \  x,y  & \leq& 1                             .&    
\end{array}
}
\end{equation*}
$x_{uv}$ is called an \emph{assignment variable}; $y_u$ is called the \emph{opening variable} of $u$.

However, the integrality gap of this LP, defined as the maximum ratio $\frac{\OPT}{\tau}$ where $\LP_k(G_{\leq\tau})$ is feasible, is unbounded; hence this LP cannot in general estimate $\OPT$ very well. We use the approach of Cygan et al.~\cite{CyganHK12} to address this issue: consider the connected components of $G_{\leq\tau}$; if $\tau\geq\OPT$, a vertex can be assigned only to the vertices in the same connected component. For each connected component $G_i$ of $G_{\leq\tau}$, the algorithm decides the minimum value of $k_i$ for which $\LP_{k_i}(G_i)$ is feasible; if $ \sum_i k_i > k$, this certifies that there exists no solution of value $\tau$ or better ($\tau<\OPT$). Now let $\tau^*$ be the smallest $\tau$ for which the algorithm fails to certify that $\tau<\OPT$; since the algorithm will not be able to provide a certificate for $\tau=\OPT$, we have $\tau^*\leq\OPT$. The algorithm then separately solves the subproblems given by the connected components of $G_{\leq\tau^*}$: given a \emph{connected} graph $G$ for which $\LP_k(G)$ is feasible, our algorithm finds a set of $k$ vertices to open, with an assignment of every vertex to an open center that is within the distance of nine. Note that $d_{G_{\leq\tau^*}}(u,v)\leq 9$ implies $c(u,v)\leq 9\tau^*\leq 9\cdot\OPT$ from the triangle inequality.
\begin{lemma}[Cygan et al.~\cite{CyganHK12}]\label{l:preprocessing}
Suppose there exists an algorithm that, given a connected graph $G$, capacity $L$, and $k$ for which $\LP_k(G)$ is feasible, computes a set of $k$ vertices to open and an assignment of every vertex $u$ to an open center $v$ such that $d(u,v)\leq \rho$ and the capacity constraints are satisfied. Then we can obtain a $\rho$-approximation algorithm for the capacitated $k$-center problem.
\end{lemma}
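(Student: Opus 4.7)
The plan is to execute the three-step template described in the paragraph just above the lemma, and then verify that the output is a $\rho$-approximate feasible solution. First, I would enumerate candidate values of $\tau$: since $\OPT$ must be one of the pairwise distances $c(u,v)$, there are only $O(|V|^2)$ values to try, and it suffices to find the smallest $\tau^*$ among these for which the certification below fails.

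Second, for each candidate $\tau$ I would construct $G_{\leq\tau}$, list its connected components $G_1,\dots,G_t$, and for each component $G_i$ compute the smallest $k_i$ such that $\LP_{k_i}(G_i)$ is feasible (this is a discrete search over $k_i\in\{0,1,\dots,|V(G_i)|\}$, each step being a single LP solve). If $\sum_i k_i > k$, then $\tau$ cannot support an integral solution either: any feasible $k$-center solution of value $\leq\tau$ assigns every vertex to a center in the same connected component of $G_{\leq\tau}$ (because edges of the metric used in the assignment have length at most $\tau$), and hence induces a feasible integral (and thus fractional) solution to $\LP_{k_i}(G_i)$ with $\sum k_i\leq k$, a contradiction. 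Therefore any $\tau$ that is certified is strictly less than $\OPT$, which implies $\tau^*\leq\OPT$ for the smallest uncertified $\tau^*$.

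Third, by the choice of $\tau^*$, for every connected component $G_i$ of $G_{\leq\tau^*}$ we have a feasible fractional solution to $\LP_{k_i}(G_i)$ with $\sum_i k_i\leq k$. I apply the hypothesized algorithm to each $G_i$ to obtain a set of $k_i$ open centers together with an assignment of every vertex $u\in V(G_i)$ to some open center $v$ with $d_{G_i}(u,v)\leq \rho$. Taking the union over $i$ opens at most $k$ centers in total (padding arbitrarily if $\sum k_i<k$). By the triangle inequality applied to a path in $G_{\leq\tau^*}$ of length at most $\rho$, each consecutive edge of which has metric length at most $\tau^*$, we get $c(u,v)\leq \rho\cdot\tau^*\leq\rho\cdot\OPT$. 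The capacity constraints are preserved because they hold componentwise.

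The only real obstacle is the soundness of the certification step, i.e., the claim that feasibility of $\LP_k(G_{\leq\tau})$ (equivalently $\sum k_i\leq k$) is genuinely a necessary condition for $\tau\geq \OPT$; this requires the observation that any integral optimal assignment of value $\leq\tau$ decomposes along connected components of $G_{\leq\tau}$, which in turn follows from the fact that an edge used in the assignment has metric length at most $\tau$ and therefore lies inside a single component of $G_{\leq\tau}$. Everything else is bookkeeping: polynomial-time enumeration of $\tau$, polynomial-time LP search for $k_i$, and one invocation per component of the hypothesized algorithm.
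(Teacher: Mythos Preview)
Your proposal is correct and follows essentially the same argument the paper sketches in the paragraph preceding the lemma: guess $\tau$, split $G_{\leq\tau}$ into connected components, compute the minimum $k_i$ per component via the LP, certify $\tau<\OPT$ when $\sum_i k_i>k$, take $\tau^*$ as the smallest uncertified value, and then invoke the hypothesized algorithm componentwise together with the triangle inequality. You add a few implementation details the paper leaves implicit (restricting $\tau$ to the $O(|V|^2)$ pairwise distances, padding when $\sum_i k_i<k$, and an explicit soundness argument for the certification), but the underlying route is identical.
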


\paragraph{Distance-$r$ transfers.} The above discussion reduces the task of designing
an approximation algorithm for the capacitated $k$-center problem to that of using a solution $(x,y)$
to $\LP_k(G)$ in order to select $k$ centers so that each vertex in the connected graph $G$ is assigned to a
center in a nearby neighborhood. Simple algebraic manipulations show that the LP solution satisfies $|U|=\sum_{u\in U}\sum_{w:(w,u)\in E}x_{wu} \leq \sum_{w\in N^+(U)} L(w)\cdot y_w$;
note that, if the opening variables $y$ are integral, this exactly corresponds to Hall's condition~\cite{H} and hence we can assign every vertex to an adjacent center. However, the LP solution may open each center only by a small fractional amount; in order to obtain an integral solution, it is
therefore natural to try to aggregate fractional openings of nearby vertices. As different
centers have varying capacities, one difficulty of this approach is that the rounding also needs to ensure that the aggregation does not
decrease the available capacity. Consider a center $u$ of capacity $L(u)$ that is open with
fraction $y_u$; we can view it as a center with the \emph{fractional capacity} of $L(u) \cdot y_u$, because in a sense this is the maximum {\em number} (as a
fraction) of vertices this center serves according to the LP.
Our rounding procedure will open $k$ centers, while ensuring that we can \emph{transfer} the fractional capacity
of each $u$ to one or more of the open centers that are close by (and the performance guarantee is determined by how close these centers are).
The following definition formalizes the notion of a distance-$r$ transfer:
\begin{definition}
\label{def:transfer}
Given a graph $G=(V,E)$ with a capacity function $L : V \to \mathbb{Z}_{\geq 0}$ and $y \in \mathbb{R}^{V}_{+}$, a
vector $y'  \in \mathbb{R}^{V}_{+}$ is a distance-$r$ transfer of  $(G,L,y)$ if
\begin{enumerate}[(\ref{def:transfer}a):]
\item \label{def:transfer1} $\sum_{v \in V} y'_v = \sum_{v \in V} y_v$ and
\item \label{def:transfer2}  $\sum_{v:  d(v,U) \leq r}  L(v) y'_v \geq  \sum_{u\in  U} L(u) y_u $ for all $U \subseteq V$.
\end{enumerate}
If $y'$ is the characteristic vector of $S \subseteq V$, we say $S$ is a distance-$r$ transfer of $(G,L,y)$.
\end{definition}
The given conditions say
that a transfer should not change the total number of open centers,
while ensuring that the total fractional capacity in each small neighborhood does not decrease as a result of this transfer.
We also remark that multiple transfers can be composed: if $y'$ is a distance-$r$ transfer of $(G, L, y)$ and $y''$ is a
distance-$r'$ transfer of $(G,L,y')$ then $y''$ is a distance-$(r+r')$ transfer of $(G,L,y)$.

\begin{lemma}\label{lem:framework-main}
For a graph $G=(V,E)$ with a capacity function $L : V \to \mathbb{Z}_{\geq 0}$, let $(x,y)$ be a feasible solution to $\LP_k(G)$. If $S\subset V$ is a distance-$r$ transfer of $(G,L,y)$, then every vertex $v\in V$ can be assigned to a center $s\in S$ such that $d_G(v,s)\leq r+1$, while ensuring no center is assigned more vertices than its capacity. Moreover, $|S|=k$, and this assignment can be found in polynomial time.
\end{lemma}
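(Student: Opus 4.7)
My plan is to reduce the assignment task to a (capacitated) bipartite matching problem and verify Hall's condition. Concretely, I would build an auxiliary bipartite graph $H$ whose left side is $V$ (each $v$ carrying demand $1$) and whose right side is $S$ (each $s$ carrying capacity $L(s)$), where $v$ is joined to $s$ iff $d_G(v,s)\leq r+1$. An assignment as required by the lemma is exactly an integral $b$-matching on $H$ saturating the left side, and such a matching exists iff the capacitated Hall condition $\sum_{s\in N_H(U)} L(s)\geq |U|$ holds for every $U\subseteq V$.

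The cardinality claim $|S|=k$ is immediate: by \ref{def:transfer1}, $|S|=\sum_v y'_v=\sum_v y_v$, and the latter equals $k$ by the LP budget constraint.

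The main step is to verify Hall's condition. Fix $U\subseteq V$. Combining the LP constraints $\sum_{w:(w,v)\in E} x_{wv}=1$ for each $v$ and $\sum_{u:(w,u)\in E} x_{wu}\leq L(w)y_w$ for each $w$ yields the bound already mentioned in the preamble:
\[
  |U| \;=\; \sum_{v\in U}\sum_{w:(w,v)\in E} x_{wv} \;\leq\; \sum_{w\in N^+(U)}\sum_{u:(w,u)\in E} x_{wu} \;\leq\; \sum_{w\in N^+(U)} L(w)\,y_w .
\]
Now I would apply \ref{def:transfer2} with the subset $N^+(U)$: since $y'$ is the characteristic vector of $S$,
\[
  \sum_{w\in N^+(U)} L(w)\,y_w \;\leq\; \sum_{s\in V:\,d(s,N^+(U))\leq r} L(s)\,y'_s \;=\; \sum_{s\in S:\,d(s,N^+(U))\leq r} L(s).
\]
Any such $s$ satisfies $d_G(s,U)\leq r+1$ by the triangle inequality, so it lies in $N_H(U)$; chaining the two displays gives Hall. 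The polynomial-time statement then follows because an integral $b$-matching on $H$ can be computed in polynomial time via standard max-flow.

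The one subtle point, which I expect to be the only real obstacle, is choosing the right subset to feed into \ref{def:transfer2}: plugging in $U$ itself would only bring in centers within distance $r$ of $U$, which is insufficient, whereas expanding to $N^+(U)$ converts the LP-side capacity budget (which lives on the open neighborhood of $U$) into a statement about $S$. It is precisely this one-edge expansion that accounts for the ``$+1$'' in the final distance bound $r+1$.
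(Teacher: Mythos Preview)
Your proposal is correct and is essentially identical to the paper's proof: both set up a bipartite assignment problem between $V$ and $S$ with edges for pairs at distance at most $r+1$, verify the capacitated Hall condition by first bounding $|U|\le\sum_{w\in N^+(U)}L(w)y_w$ from the LP and then applying Condition~(\ref{def:transfer}\ref{def:transfer2}) to the set $N^+(U)=\{w:d_G(w,U)\le 1\}$, and conclude with a polynomial-time matching computation. The only cosmetic difference is that the paper phrases the matching via vertex duplication rather than $b$-matching.
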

\begin{proof}
  Consider the natural bipartite matching problem between $V$ and the multiset of open centers that are duplicated to their capacities: i.e, each center $s\in S$ appears in the
  multiset with multiplicity $L(s)$. Every vertex $v$ in $V$ is connected to every copy of each center $s\in S$ such that $d(v,s) \leq r+1$.  Observe that a matching of cardinality $|V|$ naturally defines an assignment that satisfies the desired properties. We shall now show that there exists
  such a matching by verifying Hall's condition, i.e., that for all $U\subset V$, $|U| \leq \sum_{s\in S:d_G(s,U)\leq r+1} L(s)$.

As was observed earlier, we have $|U| \leq \sum_{w:d_G(w,U)\leq 1} L(w)\cdot y_w$; from Condition~(\ref{def:transfer}\ref{def:transfer2}), $|U| \leq \sum_{w:d_G(w,U)\leq 1} L(w)\cdot y_w \leq \sum_{s\in S:d_G(s,U)\leq r+1} L(s)$. This matching can be found in polynomial time, and $|S|=k$ follows from Condition~(\ref{def:transfer}\ref{def:transfer1}).
\end{proof}

\paragraph{Tree instances.} As was discussed earlier, we solve the general problem via reduction to \emph{tree instances}.
\begin{definition}
A tree instance is defined as a tuple $(T,L,y)$, where $T=(V,E)$ is a rooted tree with the capacity function $L:V \to \mathbb{Z}_{\geq 0}$, and \emph{opening variables} $y \in (0,1]^{V}$ satisfy that $\sum_{v\in V}y_v$ is an integer and $y_v=1$ for every non-leaf node $v\in V$.
\end{definition}

\section{Reducing General Instances to Trees}\label{sec:gen2tree}
In this section, we present the reduction from the capacitated $k$-center problem to tree instances.
\begin{lemma}\label{l:redmain1}
Suppose there exists a polynomial-time algorithm that finds an integral distance-$r$ transfer of a tree instance. Then there exists a $(3r+3)$-approximation algorithm for the capacitated $k$-center problem.
\end{lemma}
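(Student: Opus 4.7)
The plan is to invoke Lemmas~\ref{l:preprocessing} and~\ref{lem:framework-main} to reduce the task to constructing a single integral transfer, and then to produce that transfer in two stages using the composition property noted right after Definition~\ref{def:transfer}. By Lemma~\ref{l:preprocessing}, it suffices to handle a connected graph $G$ together with a feasible solution $(x,y)$ to $\LP_k(G)$; by Lemma~\ref{lem:framework-main}, exhibiting a set $S\subseteq V$ of size $k$ that is a distance-$(3r+2)$ transfer of $(G,L,y)$ yields an assignment in which every vertex is matched within $G$-distance $3r+3$, which is a $(3r+3)$-approximation. Thus the whole task is to produce such an integral transfer, and the tree-rounding hypothesis is to be spent at the very end.

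To produce $S$, the first stage is to pick a set $F\subseteq V$ of candidate centers and move the LP openings onto $F$ via a short-distance transfer $y_F$. The natural choice is to grow $F$ greedily so that its members are pairwise at $G$-distance at least three while preferring vertices of large capacity; this makes every vertex of $V$ lie within $G$-distance two of some candidate. Assigning the opening of each vertex to a nearest candidate (with care taken to preserve fractional capacity neighborhood-wise) gives a distance-$2$ transfer $y_F$ of $(G,L,y)$. With a suitable choice of $F$ one can further arrange that every candidate other than a designated set of ``leaves'' already has $y_F$-value equal to $1$, while the leaves retain fractional values whose sum is an integer.

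The second stage is to organize $F$ into a rooted tree $T$ whose leaves are precisely the fractionally opened candidates and whose edges only connect candidates that are within $G$-distance three. The triple $(T,L,y_F)$ is then a tree instance in the sense of the definition above, and invoking the hypothesized polynomial-time algorithm yields an integral distance-$r$ transfer $S$ of $(T,L,y_F)$. Because every tree edge is realized by a graph path of length at most three, $S$ is also a distance-$3r$ transfer of $(G,L,y_F)$ in the original graph. Composing with the first stage shows that $S$ is a distance-$(3r+2)$ transfer of $(G,L,y)$, and then Lemma~\ref{lem:framework-main} closes the argument.

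The main obstacle is the first stage: one must verify that aggregating openings onto $F$ truly produces a valid distance-$2$ transfer, i.e., that for every $U\subseteq V$ the inequality $\sum_{v\in F,\, d_G(v,U)\leq 2} L(v)\,(y_F)_v \geq \sum_{u\in U} L(u)\,y_u$ holds, even after openings have been shifted toward the higher-capacity members of $F$. Simultaneously, one needs the resulting $y_F$ to be integral at every non-leaf candidate while keeping the tree's edges within $G$-distance three; these twin requirements, together with the need to identify the right ``leaves,'' drive the careful combinatorial construction of $F$ and $T$ that Section~\ref{sec:gen2tree} develops in detail.
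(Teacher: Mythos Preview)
Your chain of reductions---invoke Lemma~\ref{l:preprocessing} to reduce to a connected $G$ with a feasible $(x,y)$, then Lemma~\ref{lem:framework-main} to reduce the $(3r+3)$-approximation to the existence of an integral distance-$(3r+2)$ transfer of $(G,L,y)$, then defer the actual construction of that transfer to Section~\ref{sec:gen2tree} (i.e., Lemma~\ref{l:redmain2})---is exactly the paper's one-line proof.

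Where your sketch departs from the paper is in the decomposition you propose for that transfer: you split $3r+2$ as a distance-$2$ aggregation onto a greedy $3$-net $F$, followed by a distance-$3r$ tree step. These two stages are in tension in a way your outline does not resolve. If $F$ consists of the cluster midpoints $v$ (so that adjacent members really are at $G$-distance $3$), then pushing the opening onto $v$ is generally \emph{not} a valid transfer, because $v$ need not be the high-capacity vertex of its neighborhood and Condition~(\ref{def:transfer}\ref{def:transfer2}) can fail. If instead $F$ consists of the high-capacity representatives $m_v\in N^+(v)$ (your ``preferring vertices of large capacity''), then aggregation is a valid distance-$2$ transfer, but adjacent $m_u,m_v$ can be at $G$-distance up to $5$, so the tree step only yields a distance-$5r$ transfer and the arithmetic no longer gives $3r+2$.

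The paper's actual decomposition is $1+3r+1$, not $2+3r$: it introduces an auxiliary vertex $a_v$ adjacent to all of $N^+_G(v)$ with capacity $L(a_v)=L(m_v)$, aggregates onto $a_v$ (a distance-$1$ transfer, since $L(a_v)$ is maximal in $N^+(v)$), builds the tree on the $a_v$'s (adjacent $a_u,a_v$ are at distance exactly $3$, so the tree step is a distance-$3r$ transfer), and finally moves each opened $a_v$ back to $m_v$ (another distance-$1$ transfer). You correctly flagged the ``twin requirements'' as the obstacle, but the greedy-$F$ construction you sketched does not meet them; the auxiliary-vertex device is the missing idea.
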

Lemma~\ref{l:redmain1} directly follows from Lemmas~\ref{l:preprocessing}, \ref{lem:framework-main}, and \ref{l:redmain2}.

\begin{lemma}\label{l:redmain2}
Suppose there exists a polynomial-time algorithm that finds an integral distance-$r$ transfer of a tree instance. Then there exists an algorithm that, given a connected graph $G=(V,E)$, capacity $L:V\to\mathbb{Z}_{\geq 0}$, and $k\in\mathbb{N}$ for which $\LP_k(G)$ has a feasible solution $(x,y)$, finds an integral distance-$(3r+2)$ transfer of $(G,L,y)$.
\end{lemma}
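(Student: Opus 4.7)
The plan is to obtain the integral distance-$(3r+2)$ transfer by composing three pieces: a distance-$2$ aggregation of $y$ onto a carefully chosen candidate set $\mathcal{S} \subseteq V$, a tree instance built over a superset of $\mathcal{S}$ whose edges each correspond to a $G$-path of length at most $3$, and the integral distance-$r$ transfer guaranteed on the tree by the hypothesised algorithm. For the first piece, I would construct $\mathcal{S}$ greedily, processing the vertices in decreasing order of $L$ and admitting $v$ whenever no element already in $\mathcal{S}$ lies within $G$-distance $2$ of $v$. The resulting $\mathcal{S}$ is maximal among subsets of $V$ pairwise at $G$-distance at least $3$, so every $v \in V$ admits a representative $\sigma(v) \in \mathcal{S}$ with $d_G(v, \sigma(v)) \leq 2$, and because $\sigma(v)$ was processed before $v$ we may also insist $L(\sigma(v)) \geq L(v)$. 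Setting $y^1_s := \sum_{\sigma(v) = s} y_v$ (and zero off $\mathcal{S}$) concentrates the fractional openings onto $\mathcal{S}$; routing each $u \in U$'s opening to $\sigma(u)$ then shows directly that $y^1$ is a distance-$2$ transfer of $(G, L, y)$ in the sense of Definition~\ref{def:transfer}.

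Next I would assemble a tree $T$ on a vertex set $V' \supseteq \mathcal{S}$ together with opening values $y'' \in (0,1]^{V'}$ satisfying (i)~every edge of $T$ joins vertices at $G$-distance at most $3$; (ii)~every non-leaf of $T$ carries $y''_v = 1$; (iii)~$\sum_v y''_v = k$; and (iv)~$y''$ is itself a distance-$2$ transfer of $(G, L, y)$. The backbone is $\mathcal{S}$, organised into a tree through short $G$-paths available from the connectivity of $G$ and the maximality of $\mathcal{S}$; auxiliary vertices drawn from $V$ are inserted along those paths and become non-leaves with $y''_v = 1$, which is what allows every tree edge to have $G$-length at most $3$ even though $\mathcal{S}$-vertices are spread at $G$-distance $\geq 3$. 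For each $s \in \mathcal{S}$ with $y^1_s \geq 1$, the integer part of $y^1_s$ is absorbed into additional non-leaf copies collocated with $s$, while the fractional remainder becomes a leaf with $y''_s \in (0,1)$; the excess openings of such saturated $\mathcal{S}$-vertices pay for the non-leaf intermediates inserted along the backbone, and the chaining is arranged so that (iv) is preserved.

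Finally, applying the hypothesised tree-rounding algorithm to $(T, L|_{V'}, y'')$ yields an integral $S \subseteq V'$ with $|S| = k$ that is a distance-$r$ transfer of the tree instance. To see that $S$ is a distance-$(3r+2)$ transfer of $(G, L, y)$, fix $U \subseteq V$ and set $U' := \{v \in V' : d_G(v, U) \leq 2\}$; property (iv) gives $\sum_{v \in U'} L(v) y''_v \geq \sum_{u \in U} L(u) y_u$, and the tree transfer applied to $U'$ yields $\sum_{s \in S:\, d_T(s, U') \leq r} L(s) \geq \sum_{v \in U'} L(v) y''_v$. Each tree edge has $G$-length at most $3$, so $d_T(s, U') \leq r$ forces $d_G(s, U') \leq 3r$ and hence $d_G(s, U) \leq 3r + 2$, completing the argument. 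The substantive difficulty is the middle step: non-leaves are rigidly constrained to carry opening exactly $1$, yet $y''$ must simultaneously dominate $y$ in every $G$-ball of radius $2$, and when $y^1_s$ greatly exceeds $1$ the excess has to be redistributed through carefully chained non-leaf intermediates without breaking capacity domination in any such ball; making this construction explicit and local, while keeping every tree edge within $G$-length $3$, is where the main technical work lies.
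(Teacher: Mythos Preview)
Your high-level framework is sound, and the composition arithmetic at the end is correct: a distance-$2$ aggregation followed by a tree transfer whose edges have $G$-length at most $3$ does yield a distance-$(3r+2)$ transfer. But the middle step---building a valid tree instance with properties (i)--(iv)---is not carried out, and you explicitly say so. That is exactly where the content of the lemma lies, so as written this is an outline, not a proof.

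There is also a concrete obstacle in your chosen route. Your set $\mathcal{S}$ is selected greedily by capacity, and the assignment $\sigma$ guarantees $L(\sigma(v)) \geq L(v)$, but nothing forces $y^1_s = \sum_{\sigma(v)=s} y_v \geq 1$ for every $s \in \mathcal{S}$. A backbone vertex $s$ with $y^1_s < 1$ cannot be a non-leaf of a tree instance, and you have no mechanism to borrow mass \emph{into} such an $s$ while preserving the distance-$2$ domination (iv). Your suggestion that ``excess openings of saturated $\mathcal{S}$-vertices pay for the non-leaf intermediates'' goes the wrong direction here.

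The paper sidesteps this entirely by exploiting the LP constraint $\sum_{u\in N^+(v)} y_u \geq 1$, which holds for \emph{every} vertex $v$. It uses the Khuller--Sussmann clustering, which partitions $V$ into clusters $\{C_v\}_{v\in\Gamma}$ with midpoints $\Gamma$ forming a tree whose edges have $G$-length exactly $3$, and with $N^+(v)\subseteq C_v$. For each cluster it introduces an \emph{auxiliary} vertex $a_v$ adjacent to all of $N^+(v)$, with capacity $L(a_v)=\max_{u\in N^+(v)} L(u)$; aggregating mass from $N^+(v)$ onto $a_v$ is then a distance-$1$ transfer that sets $y_{a_v}=1$ exactly. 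The tree instance has the $a_v$'s as non-leaves (connected along the cluster tree) and the remaining positive-$y$ vertices of each cluster as leaves; every tree edge has length at most $3$ in the augmented graph. A final distance-$1$ transfer moves any opened $a_v$ back to the genuine max-capacity vertex in $N^+(v)$, giving $1+3r+1=3r+2$. The key idea you are missing is that the LP itself supplies, at every cluster midpoint, a unit of $y$-mass within radius $1$---so the non-leaf condition $y=1$ comes for free, with no redistribution needed.
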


Our reduction, conceptually, constructs a tree instance by defining a tree on a subset of the vertices that have nonzero opening variables in the LP solution. Adjacent vertices in this tree instance may not necessarily be adjacent in $G$, but will be in close proximity; this establishes that a distance-$r$ transfer of the tree instance can be interpreted as a transfer of short distance in $G$ as well. The opening variables of this tree instance would ideally be set equal to the corresponding LP opening variables. However, recall that one of the crucial characteristics of tree instances is that every internal node has the opening variable of one. Yet, individual opening variables of the LP solution may have values less than one in general; we address this issue by using the clustering due to Khuller and Sussman~\cite{KhullerS00}.
\begin{lemma}[Khuller and Sussman~\cite{KhullerS00}]\label{l:clustering}
Given a connected graph $G=(V,E)$, $V$ can be partitioned into $\{C_v\}_{v\in\Gamma}$ for some set of \emph{cluster midpoints} $\Gamma\subset V$, such that\begin{itemize}
\item there exists a tree $U=(\Gamma,F)$ rooted at $r\in\Gamma$ such that for every $(u,v)\in F$, $d_G(u,v)=3$;
\item for all $v\in\Gamma$, $N^+_G(v)\subset C_v$; and
\item for all $u\in C_v$, $d_G(u,v)\leq 2$.
\end{itemize}
\end{lemma}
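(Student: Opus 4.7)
The plan is to grow $\Gamma$ and the tree $U$ together by a greedy process. I would pick an arbitrary root $r\in V$, initialize $\Gamma\leftarrow\{r\}$ and $F\leftarrow\emptyset$, and repeat the following loop: while some vertex lies at $G$-distance more than $2$ from $\Gamma$, locate a new midpoint at distance exactly $3$ from an existing midpoint, and add it to $\Gamma$ together with the corresponding edge to $F$. Termination is automatic since $|\Gamma|$ strictly increases and $V$ is finite. Once the loop terminates, clusters will be defined by assigning every vertex of $V$ to a closest midpoint.

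The crucial step to justify is that, whenever $\Gamma$ does not already cover $V$ at radius~$2$, I can find a new candidate midpoint at $G$-distance exactly $3$ from some vertex of $\Gamma$. For this I would invoke the connectivity of $G$: take any $b$ with $d_G(b,\Gamma)>2$ together with any path in $G$ from $b$ to $\Gamma$, and scan along it to identify a pair of consecutive vertices $b^\star,a$ with $d_G(b^\star,\Gamma)>2$ but $d_G(a,\Gamma)\leq 2$. Letting $u\in\Gamma$ witness $d_G(a,u)\leq 2$, the triangle inequality yields $d_G(b^\star,u)\leq 3$, which combined with $d_G(b^\star,\Gamma)>2$ forces $d_G(b^\star,u)=3$; I would then add $b^\star$ to $\Gamma$ and $(u,b^\star)$ to $F$.

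By construction, every midpoint added after $r$ is at distance strictly greater than $2$ (hence at least $3$) from all previously added midpoints, so pairwise distances in $\Gamma$ remain at least $3$ throughout. Each iteration brings exactly one new vertex together with exactly one new edge, incident to a vertex already in $\Gamma$, so $F$ forms a tree rooted at $r$ with $d_G(u,v)=3$ on every edge. After termination, every $v\in V$ satisfies $d_G(v,\Gamma)\leq 2$, so assigning each $v$ to a closest midpoint produces a partition $\{C_v\}_{v\in\Gamma}$ satisfying $d_G(u,v)\leq 2$ for every $u\in C_v$, which is the third bullet.

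Finally, to verify $N^+_G(v)\subset C_v$: any $u$ with $d_G(u,v)\leq 1$ satisfies, for every other midpoint $w\in\Gamma$, $d_G(u,w)\geq d_G(v,w)-d_G(u,v)\geq 3-1=2>d_G(u,v)$, so $v$ is the \emph{strictly} unique closest midpoint to $u$ and $u$ is assigned to $v$ regardless of tie-breaking. The main obstacle in this plan is the ``distance exactly $3$'' guarantee in the iterative step, which the path-scanning argument above resolves; once it is established, the remaining properties are routine bookkeeping.
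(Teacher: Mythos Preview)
The paper cites this lemma from Khuller and Sussmann and does not supply its own proof, so there is no in-paper argument to compare against. Your greedy construction is correct and is essentially the standard one: the path-scanning step cleanly produces a new midpoint at $G$-distance exactly $3$ (using that $d_G$ is integer-valued on the unweighted graph $G$), the invariant that all pairwise midpoint distances are at least $3$ follows immediately, and your verifications of the tree property and of $N^+_G(v)\subset C_v$ via the strict-closest argument are sound.
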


Observe that, for every cluster $C_v$, the total opening in the neighborhood of $v$ is at least one: $\sum_{u\in N^+_G(v)}y_u\geq \sum_{u\in N^+_G(v)}x_{uv} =1$ from the LP constraints.
We will aggregate these openings to create at least one vertex with the opening variable of one in each cluster; then each cluster will contribute one ``fully open vertex'' to the tree instance, which will become the non-leaf nodes of the tree. Two non-leaf nodes in the tree instance are made adjacent if and only if their clusters are adjacent in $U$. In order to ensure that the aggregation retains the fractional capacity in the graph (in other words, to satisfy Condition~(\ref{def:transfer}\ref{def:transfer2}) of Definition~\ref{def:transfer}), we will transfer the openings in $N^+_G(v)$ to a vertex with the highest capacity in $N^+_G(v)$. Let $m_v:=\argmax_{u\in N^+_G(v)}L(u)$ denote this vertex.

If $m_u$ and $m_v$ are adjacent in this tree instance, how far can they be in $G$? Recall that $m_u$ and $m_v$ are adjacent if and only if $(u,v)\in F$; hence, $d_G(m_u,m_v)\leq d_G(m_u,u)+d_G(u,v)+d_G(v,m_v)\leq 5$. However, here comes a subtlety: if $m_v$ and $m_w$ are also adjacent in the tree, we would expect $d_G(m_u,m_w)\leq d_G(m_u,m_v)+d_G(m_v,m_w)\leq 10$, whereas a tighter bound shows that $d_G(m_u,m_w)$ in fact never exceeds 8: $d_G(m_u,m_w)\leq d_G(m_u,u)+d_G(u,v)+d_G(v,w)+d_G(w,m_w)\leq 1+3+3+1$. Therefore, a simple abstraction that a tree edge corresponds to a length-5 path in $G$ would lead to a slight slack in the analysis. In order to avoid this issue, we will create an \emph{auxiliary vertex} $a_v$ that is ``almost at the same position'' as the cluster midpoint $v$ for each cluster, and aggregate openings to this auxiliary vertex $a_v$ instead of $m_v$ as we did earlier. We will treat $a_v$ as the \emph{delegate} for $m_v$, in the sense that $a_v$ (in lieu of $m_v$) will be part of our tree instance, and if we decide to open $a_v$ from the tree instance, we will open $m_v$ instead.

\begin{figure}[t]
\begin{center}
  \includegraphics[width=14cm]{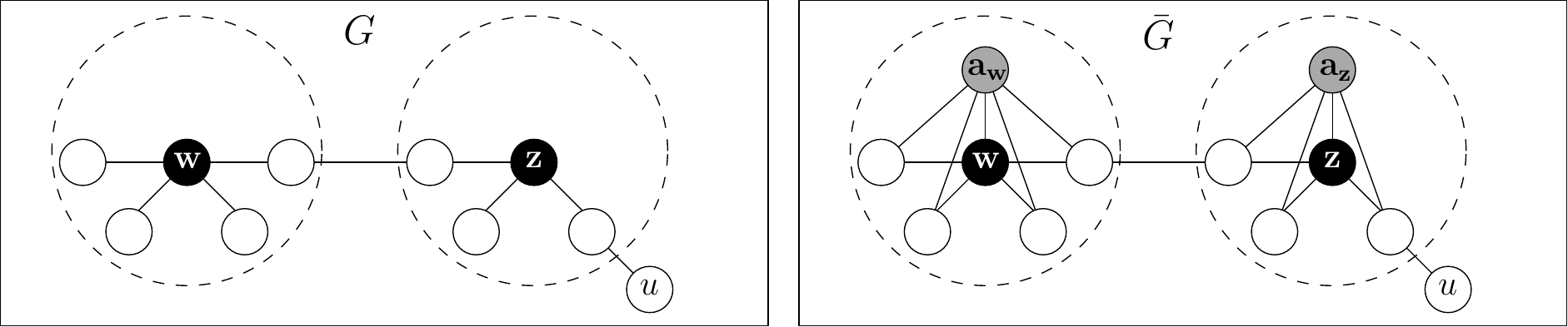} \end{center} 
\caption{Graph $\bar G$ obtained by augmenting $G$ with auxiliary vertices; black nodes correspond to cluster midpoints, dashed circles represent their neighborhoods. }
\label{fig:reduction}
\end{figure}

\begin{proof}[Proof of Lemma~\ref{l:redmain2}]
We first augment the graph by introducing the auxiliary vertices (see also Figure~\ref{fig:reduction}): for each $C_v$, we add a new vertex $a_v$ to the graph, along with the edges from $a_v$ to every vertex in $N^+_G(v)$. Let $\bar G=(\bar V,\bar E)$ be this augmented graph. Observe that $a_v$ is located ``almost at the same position'' as $v$ in the following sense: for every $u\in V$, $d_{\bar G}(u,a_v)=d_G(u,v)$ unless $u=v$; $d_{\bar G}(v,a_v)=1$. Note that $d_{\bar G}(a_w,a_z)=d_G(w,z)$. $L$ and $y$ are accordingly augmented by setting the capacity and the opening variable of the new auxiliary vertex respectively as $L(a_v):=L(m_v)$ and $y_{a_v}:=0$.

Now our reduction works in three phases: in the first phase, we aggregate the opening of 1 from $N_G^+(v)$ to $a_v$; this phase yields a distance-1 transfer $y^{\textsf{first}}$ of $(\bar G,L,y)$. In the second phase, we construct a tree instance by defining a tree on a subset of $\bar V$, and invoke the polynomial-time algorithm to find an integral distance-$r$ transfer of this tree instance. We will see that this transfer can be interpreted as a distance-$3r$ transfer $y^{\textsf{second}}$ of $(\bar G,L,y^{\textsf{first}})$. In the last phase, we transfer the opening of each auxiliary variable $a_v$ to the vertex it delegates, $m_v$. This constitutes a distance-1 transfer $y^{\textsf{third}}$ of $(\bar G,L,y^{\textsf{second}})$.

The opening aggregation in the first phase works as follows: for each cluster $C_v$, we increase $y_{a_v}$ while simultaneously decreasing $y_u$ for some $u\in N^+_G(v)$ with $ y_u>0$. If $ y_{a_v}$ reaches one, we stop; if $ y_u$ reaches zero, we find another $u\in N^+_G(v)$. The initial choice of $u$ is always taken as $m_v$ so that this procedure ensures that $ y_{m_v}$ becomes zero. The procedure outputs a distance-1 transfer $y^{\textsf{first}}$, since whenever an opening variable decreases during the construction, we increase the opening variable of an adjacent vertex with higher or equal capacity.

In the second phase, we define a tree $T$ on the set of vertices with nonzero opening variables. Note that this in particular implies that $m_v\notin T$ for each cluster $C_v$. $T$ is constructed from $U =
(\Gamma,F)$ as follows: we replace each $v\in \Gamma$ by $a_v$ to obtain a tree on the auxiliary vertices, and for each vertex $u\in C_v$ such that $ y_u>0$, we attach $u$ as a (leaf) child of $a_v$. Note that every non-leaf node is an auxiliary vertex and therefore has the opening variable of one. The total opening is equal to the total opening of $y$, and therefore $(T,L,y^{\textsf{first}})$ is a valid tree instance; we invoke the polynomial-time algorithm to find an integral distance-$r$ transfer of this instance. For any two nodes $i$ and $j$ that are adjacent in this tree instance, either $i=a_u$ and $j=a_v$ for some $(u,v)\in F$, or $i=a_v$ and $j\in C_v$. In the former case, $d_{\bar G}(i,j)=3$; in the latter case, $d_{\bar G}(i,j)\leq 2$. Thus, the integral distance-$r$ transfer of the tree instance can be interpreted as an integral distance-$3r$ transfer $y^{\textsf{second}}$ of $(\bar G,L,y^{\textsf{first}})$.

Note that $y_{m_v}^{\textsf{second}}=0$ for every cluster $C_v$, since $m_v$ does not participate in the tree instance; on the other hand, $a_v$ may have been opened by the tree algorithm. In the last phase, we transfer the opening of $a_v$ to $m_v$, the vertex delegated by $a_v$. This yields an integral distance-$1$ transfer $y^{\textsf{third}}$ of $(\bar G,L,y^{\textsf{second}})$.

Note that $y_{a_v}^{\textsf{third}}=0$ for every cluster $C_v$; by projecting $y^{\textsf{third}}$ back to $V$, we obtain an integral distance-$(3r+2)$ transfer of $(G,L,y)$.
\end{proof}

\section{Algorithm for Tree Instances}\label{sec:tree-routing-alg}
In this section we prove the following.
\begin{lemma}\label{l:treealgorithm}
  There is a polynomial time algorithm that finds an integral distance-$2$ transfer of a given tree instance $(T,L,y)$.
\end{lemma}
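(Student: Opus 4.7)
The plan is a recursive bottom-up algorithm. I would root $T$ at an arbitrary vertex and process internal nodes in post-order. At each internal node $v$, the algorithm produces an integral opening $y'$ on the subtree $T_v$, possibly leaving a single fractional residual at $v$ itself to be absorbed by the parent in a later step. A useful structural observation throughout is that in a tree, two vertices lie within distance $2$ if and only if they share a common neighbor, so the distance-$2$ capacity condition amounts to a condition on the capacities in the closed stars around internal nodes.

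For the base of the recursion, consider an internal node $v$ whose children are all leaves $u_1, \ldots, u_t$, with $f := \sum_i y_{u_i}$. I would open $v$ together with the $\lceil f \rceil$ leaves of largest capacity. A rearrangement inequality shows that the total capacity of the top-$\lceil f \rceil$ leaves is at least $\sum_i L(u_i) y_{u_i}$, so the distance-$2$ capacity condition is satisfied for every $U \subseteq \{v, u_1, \ldots, u_t\}$. If $f$ is non-integer, the local rounding opens $\lceil f \rceil - f$ more centers than $y$ prescribed; this excess is recorded as a residual at $v$ that the parent will reconcile.

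For an internal node $v$ with a mix of leaf children and already-processed internal children, I would combine the residuals propagated from the internal children with the fractional openings of the leaf children, and then decide whether to keep $v$ open (and round the children accordingly) or close $v$ (and transfer its opening via a chain along the tree to descendants or to a sibling subtree, as in the introduction's path example). Although a chain may traverse several tree edges, the resulting $y'$ can still be a valid distance-$2$ transfer when the open vertices cover the distance-$2$ neighborhood of each closed node with sufficient capacity. The decision between keeping $v$ open and closing $v$ is made purely locally from the residuals at $v$'s children and the capacity structure of $v$'s tree-neighborhood, which is the simple local strategy advertised in the introduction.

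The main obstacle is verifying the distance-$2$ capacity condition globally, for every $U \subseteq V$ rather than just for $U$ within a single subtree. My plan is to maintain a strengthened recursive invariant: the rounding of each subtree $T_v$ satisfies the condition for all $U \subseteq T_v$, and moreover the residual at $v$ is controlled so that capacities within distance $2$ of $v$ in $T$ can absorb it when combined with the parent's contribution. The key structural fact used inductively is that, in a tree, the neighborhood $\{w : d_T(w, U) \leq 2\}$ decomposes along tree edges, so any $U$ crossing a subtree boundary can be split along that boundary and each piece handled by the inductive hypothesis.
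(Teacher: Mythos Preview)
Your high-level plan (bottom-up recursion with a single fractional residual propagated to the parent) matches the paper's, but your base case commits to the wrong set of vertices, and this is a genuine gap rather than a detail.

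You propose to open the internal node $v$ together with the $\lceil f\rceil$ largest-capacity leaves and to place the residual at $v$; this makes $v$ the ``optional'' vertex whose fate the parent decides. When $L(v)$ is large compared to the leaf capacities this is wrong: $v$ must be in the committed set, because if the parent later elects not to open the residual, the subtree loses $L(v)$ worth of capacity that the leaves cannot replace. Concretely, let $T$ have root $r$ with three internal children $v_1,v_2,v_3$, each $v_j$ with two leaves $a_j,b_j$; set $y_r=y_{v_j}=1$, $y_{a_j}=y_{b_j}=1/6$ (so $\sum y=5$), and $L(r)=L(a_j)=L(b_j)=1$, $L(v_j)=M$. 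Your base case at each $v_j$ opens $\{v_j,a_j\}$ and records the residual at $v_j$. At $r$ the residuals sum to an integer; applying your rule again (treating the $v_j$ as leaves) you open $r$ and one $v_j$, yielding $S=\{r,v_1,a_1,a_2,a_3\}$. For $U=\{v_2,v_3\}$ one has $\sum_{u\in U}L(u)y_u=2M$, but the vertices of $S$ within tree-distance $2$ of $U$ are $\{r,v_1,a_2,a_3\}$ with total capacity $M+3$, so Condition~(\ref{def:transfer}\ref{def:transfer2}) fails once $M>3$. With four or more such children, \emph{no} reconciliation at $r$ works unless you undo the leaf commitments $a_j$, which defeats the bottom-up locality you rely on.

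The paper fixes exactly this point: at a node $r$ with leaf children $v_1,\dots,v_\ell$ and $Y=\sum_i y_{v_i}$, it commits to the $\lfloor Y\rfloor+1$ highest-capacity vertices of $T_r$ \emph{including $r$ in the comparison}, and replaces $T_r$ by a single leaf $p$ with $\bar y_p=Y-\lfloor Y\rfloor$ and $\bar L(p)=\min\bigl(L(r),L(v_{\lfloor Y\rfloor+1})\bigr)$; the recursion is then on the contracted tree. In the example above this commits to every $v_j$, so all high-capacity nodes are guaranteed open regardless of the parent's decision, and the residual carried upward has capacity only $1$. Your recursive-step description (``close $v$ and transfer via a chain'') is also too vague to analyze---the introduction itself explains why chains fail for nonuniform capacities---but the decisive error is the choice of which vertex carries the residual.
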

We remark that it is easy to see that some tree-instances do not admit an integral distance-$1$ transfer and
the above lemma is therefore the best possible. One example is the following: the instance consists
of a root with six children, where each child is opened with a fraction $2/3$, and all vertices have
the same capacity; it is easy to see that any integral solution needs to transfer fractional
capacity from one leaf to another (i.e., of distance $2$).  We now present the algorithm along with
the arguments of its correctness.

The algorithm builds up the solution by recursively solving smaller tree instances.  The base case
is simple: if $|T| \leq 1$ then simply open the vertex in $V(T)$ if any. By the integrality of
$\sum_{v\in V(T)} y_v$ this is clearly a distance-$2$ transfer (actually a distance-$0$
transfer). Let us now consider the more interesting case when $|T| \geq 2$; then there exists a node
$r$ of which every child is a leaf. Let $v_1, \dots, v_\ell$ be the children of $r$, in the
non-increasing order of capacity: $L(v_1)\geq\cdots\geq L(v_\ell)$. Let $T_r$ denote the subtree
rooted at $r$ and $Y:=\sum_{i=1}^{\ell}y_{v_i}$. The algorithm considers two separate cases
depending on whether $Y$ is an integer.

Let us start with the simpler case when $Y$ is an integer: the algorithm selects the set $S_r$
consisting of the $Y+1$ vertices of highest capacity in $T_r$. As every pair of nodes in $T_r$ are
within a distance of $2$, $S_r$ is a distance-2 transfer of the tree instance induced by $T_r$. The algorithm
then solves the tree instance induced by $\bar T: = T \setminus T_r$ to obtain a distance-2 transfer
$\bar S$ of size $\sum_{v\in T} y_v - Y - 1$. It follows that $S:= S_r
\cup \bar S$ is a distance-2 transfer of $(T, L, y)$.

We now consider the final more interesting case when $Y$ is not an integer. In this case, we \emph{cannot} consider
$T_r$ and $T\setminus T_r$ as two separate instances because the $y$-values suggest to
either open $\lfloor Y \rfloor +1$ or $\lceil Y \rceil +1$ centers in $T_r$: a choice that depends
on the selected centers in $T\setminus T_r$.  As at least $\lfloor Y \rfloor +1$ of the vertices
in $T_r$ will be selected as centers in either case, the algorithm will naturally commit itself to
open the $\lfloor Y \rfloor + 1$ vertices in $T_r$ of highest capacity. Let $S_{\mathsf{commit}}$
denote that set and note that it equals $\{v_1, \dots, v_{\lfloor Y \rfloor}, r\}$ or $\{v_1, \dots,
v_{\lfloor Y \rfloor}, v_{\lfloor Y \rfloor +1}\}$ dependent on which node of $r$ and $v_{\lfloor
  Y\rfloor +1}$ has highest capacity ($v_{\lfloor
  Y\rfloor +1}$ is well defined since we have that the number of children $\ell$ is at least $\lceil
Y \rceil$ from $y \leq \mathbf{1}$). 
By the selection of $S_{\mathsf{commit}}$, we have 
\begin{align}
\label{eq:commit}
\sum_{u\in V(T_r)} y_u L(u) \leq \sum_{s\in S_{\mathsf{commit}}} L(s) +   \bar y_p \bar L(p),
\end{align}
where $\bar y_p = Y - \lfloor Y \rfloor$ and $\bar L(p) = \min[L(r), L(v_{\lfloor Y \rfloor + 1})]$.  In other
words, if the algorithm on the one hand chooses to only open the $\lfloor Y \rfloor + 1$ centers $S_{\mathsf{commit}}$ in
$T_r$, then an additional fractional capacity $\bar y_p \bar L(p)$ needs to be transferred from $T_r$ to an open center
in $T\setminus T_r$. On the other hand, if the algorithm chooses to open all the  centers  $\lceil Y \rceil + 1$ in
$S_{\mathsf{commit}} \cup \{v_{\lfloor Y \rfloor + 1}, r\}$ then those centers can
accomodate all the fractional capacity in $T_r$ together with $(1-\bar y_p)\bar L(p)$ additional capacity.
\begin{figure}[t]
\begin{center}
\includegraphics[width=12cm]{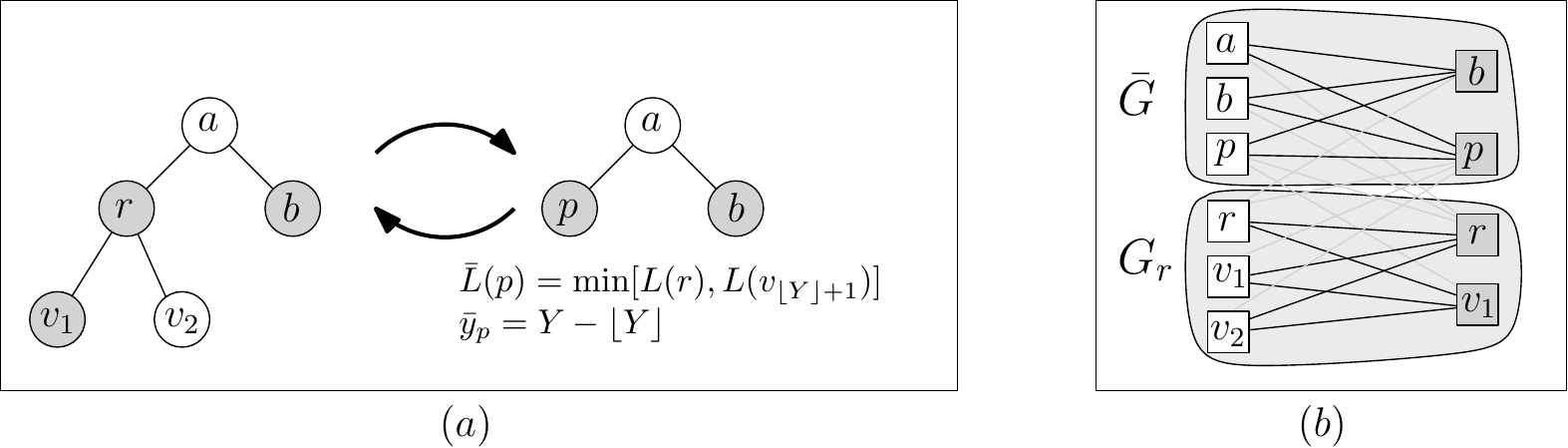}
\end{center}
\caption{(a) The construction of $\bar T$ from $T$ with the subtree $T_r$ rooted at $r$ with
  children $v_1$ and $v_2$; the grey vertices are those selected in potential solutions to $\bar T$ and
  $T$, respectively. (b) The bipartite graph and the induced subgraphs $\bar G$ and $G_r$ that are
  used in the proof of Claim~\ref{claim:treealgo}.}
\label{fig:treealgo}
\end{figure}

We defer this decision to be based on the solution of the smaller tree
instance $(\bar T,  \bar L, \bar y)$ obtained from $(T,  L, y)$ as follows (see also Figure~\ref{fig:treealgo}a): replace  $T_r$ by
the vertex $p$ that represents the deferred decision and
 let  $\bar y, \bar L$ be the natural restrictions of $y, L$ on $ T \setminus T_r$ with  $\bar
y_p = Y - \lfloor Y \rfloor$ and $\bar L(p) =\min[L(r), L(v_{\lfloor Y \rfloor + 1})]$.
The algorithm then recursively solves this smaller instance to obtain a distance-2 transfer $\bar S$
of $\bar T$. From $\bar S$ it constructs the solution $S$ to the original problem instance by first
replacing $p$ by the vertex $v_{\lfloor Y \rfloor +1}$ or $r$ that was not chosen to be  in $S_{\mathsf{commit}}$ if $p\in
\bar S$, and then adding $S_{\mathsf{commit}}$ to it.

We complete the proof of Lemma~\ref{l:treealgorithm} by arguing that $S$ is a distance-2 transfer of the original
tree instance $(T,L,y)$.  Note that, as $|\bar S| = \sum_{v \in \bar T} \bar y_v = \sum_{v\in T} y_v
-1 - \lfloor Y \rfloor$, we have $|S| = |\bar S| + |S_{\mathsf{commit}}| = \sum_{v\in V} y_v$ as required. It remains to verify
Condition~(\ref{def:transfer}\ref{def:transfer2}) of Definition~\ref{def:transfer}:

\begin{claim} 
\label{claim:treealgo}We have
$
\displaystyle \sum_{u \in U} y_u L(u) \leq \sum_{s \in S: d(s, U) \leq 2} 
L(s)$ for all   $U \subseteq V(T).
$
\end{claim}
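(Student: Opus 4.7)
The plan is to proceed by induction on the size of $T$, where the inductive hypothesis is exactly the statement of the claim (for the smaller instance $(\bar T, \bar L, \bar y)$); the base case $|T| \leq 1$ is immediate from the base case of the algorithm. For a given $U \subseteq V(T)$, decompose $U = U_r \cup \bar U$ with $U_r := U \cap V(T_r)$ and $\bar U := U \setminus V(T_r)$, bound each part separately, and then combine.

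For the $U_r$ part, I would split on whether $r \in U$. If $r \in U_r$, apply $(\ref{eq:commit})$ directly:
\[
\sum_{u \in U_r} y_u L(u) \;\leq\; \sum_{u \in V(T_r)} y_u L(u) \;\leq\; \sum_{s \in S_{\mathsf{commit}}} L(s) + \bar y_p \bar L(p).
\]
If $r \notin U$, then $U_r$ consists only of leaves of $T_r$, and the sharper bound
\[
\sum_{v_i \in U_r} y_{v_i} L(v_i) \;\leq\; \sum_{i=1}^{\lfloor Y \rfloor} L(v_i) + (Y - \lfloor Y \rfloor)\, L(v_{\lfloor Y \rfloor + 1}) \;\leq\; \sum_{s \in S_{\mathsf{commit}}} L(s)
\]
suffices; the first inequality is a greedy rearrangement using $y_{v_i} \in [0,1]$, $\sum_i y_{v_i} = Y$, and the non-increasing order of the $L(v_i)$'s, and the second uses that $S_{\mathsf{commit}}$ contains the top $\lfloor Y \rfloor + 1$ capacities among $\{r, v_1, \dots, v_\ell\}$. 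For the $\bar U$ part, apply the inductive hypothesis on $(\bar T, \bar L, \bar y)$ with $\bar U' := \bar U \cup \{p\}$ when $r \in U$ and $\bar U' := \bar U$ otherwise, so that the $\bar y_p \bar L(p)$ term either cancels against the slack in $(\ref{eq:commit})$ or does not appear at all.

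The final step is to verify that every summand on the RHS of the combined bound is dominated by $\sum_{s \in S : d_T(s, U) \leq 2} L(s)$. Contributions from $S_{\mathsf{commit}} \subseteq V(T_r)$ are fine because $T_r$ has diameter $2$ and $U_r \neq \emptyset$. Contributions from $s \in \bar S \setminus \{p\}$ are fine because $\bar T$-distances coincide with $T$-distances among non-$p$ vertices, and closeness to $p$ in $\bar T$ translates to $d_T(s, r) \leq 2$, which is within distance $2$ of $U$ exactly when $r \in U$---matching the case in which $p$ was included in $\bar U'$. The $p$-contribution $\bar L(p)$, which appears only in Case B ($p \in \bar S$), is absorbed by the replacement vertex $w \in S$ (with $L(w) = \bar L(p)$ and $w \in V(T_r)$, hence within distance $2$ of $U_r$) when $U_r \neq \emptyset$, and by $r \in S$ (both $r$ and $v_{\lfloor Y \rfloor + 1}$ lie in $S$ in Case B, and $L(r) \geq \bar L(p)$) when $U_r = \emptyset$.

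The main obstacle is the sub-case $r \notin U$ with $U_r \neq \emptyset$: a naive application of $(\ref{eq:commit})$ here leaves a $\bar y_p \bar L(p)$ slack with no natural absorber in the distance-$2$ neighborhood of $U$, because a vertex of $\bar S$ at $\bar T$-distance $2$ from $p$ can lie at $T$-distance $3$ from the leaves of $T_r$. The sharper greedy bound on $U_r$ removes this slack entirely and is the main technical ingredient that makes the case analysis close up.
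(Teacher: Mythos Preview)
Your inductive case analysis is correct. You split on whether $r \in U$ and handle the delicate sub-case $r \notin U$, $U_r \neq \emptyset$ with a sharper greedy bound $\sum_i y_{v_i} L(v_i) \leq \sum_{i \leq \lfloor Y\rfloor} L(v_i) + (Y-\lfloor Y\rfloor)L(v_{\lfloor Y\rfloor+1}) \leq \sum_{s \in S_{\mathsf{commit}}} L(s)$, which eliminates the $\bar y_p \bar L(p)$ slack that would otherwise have no absorber at $T$-distance $2$ from the leaves of $T_r$. The absorption of the $p$-contribution (when $p \in \bar S$) by the replacement vertex $w$ in the $U_r \neq \emptyset$ case, and by $r \in S$ in the $U_r = \emptyset$ case, is also handled correctly, and your distance-translation observations ($d_{\bar T}(s,p)=d_T(s,r)$, and $\bar T$-distances agree with $T$-distances off $T_r$) are right.

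The paper takes a different, more uniform route that avoids the case split altogether. It sets up a weighted bipartite graph between $V(T)$ and $S$, then duplicates $r$ into an extra vertex $p$ on the left with weight $\bar y_p \bar L(p)$ (adjusting $r$'s left weight to $L(r)-\bar y_p \bar L(p)$), and similarly, when $p \in \bar S$, duplicates $r$ on the right with weight $\bar L(p)$ (adjusting $r$'s right weight to $L(r)-\bar L(p)$). After this splitting, the bipartite graph decomposes cleanly into a piece $G_r$ over $T_r$ (which is complete bipartite, so one only needs total left weight $\leq$ total right weight, and this is exactly~\eqref{eq:commit}) and a piece $\bar G$ over $\bar T$ (where the condition is exactly the inductive hypothesis). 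The weight-splitting trick automatically accounts for both of your cases: when $r \in U$ one naturally includes the $p$-copy on the $\bar G$ side, and when $r \notin U$ the reduced left weight of $r$ in $G_r$ plays the role of your sharper greedy bound (implicitly using $L(r) \geq \bar y_p \bar L(p)$ rather than the full rearrangement inequality). Your argument is more hands-on and makes the case structure explicit; the paper's is slicker but hides the same inequalities behind the weight redistribution.
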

\begin{proofclaim}
Consider the bipartite graph $G$ with left-hand-side $V(T)$, right-hand-side $S$, and an edge between
$v\in V(T)$ and $s \in S$ if  $d(s,v) \leq 2$. For simplicity, we slightly abuse notation and think of
$V(T)$ and $S$ as disjoint sets. Moreover, let $N(U)$ denote the neighbors
of a subset $U$ of vertices in this graph and let $w: V(T) \cup S \rightarrow \mathbb{R}$  be weights  on the
vertices defined by
$$
w(v) = \begin{cases} y_v L(v) & \mbox{if }  v\in V(T) \\
  L(v) & \mbox{if } v\in S 
\end{cases}. $$
With this notation, we can reformulate  the condition of the claim as
\begin{equation}
\label{eq:graphcond}
\sum_{u \in U} w(u) \leq \sum_{s \in N(U)} 
w(s) \qquad \mbox{ for all } U \subseteq V(T).
\end{equation}
To prove this, we shall prove a slightly stronger statement by verifying the condition separately on
two biparite graphs $G_r$ and $\bar G$ that correspond to $T_r$ and $\bar T$, respectively.  We
obtain $G_r$ and $\bar G$ from $G$ as follows (see also Figure~\ref{fig:treealgo}b). First, add a vertex $p$ to the left-hand-side by
making a copy of $r\in T$ and set $w(p) = \bar y_{p}\cdot \bar L(p)$ and update $w(r) = y_r L(r) -
\bar y_p \bar L(p)  = L(r) - \bar y_p \bar L(p) \geq 0$.
Similarly,  if $p\in \bar S$ then add a copy  $p$ of $r\in S$ and set  $w(p)   = \bar L(p)$ and update
$w(r) = L(r) - \bar L(p) \geq 0 $. Note that after these operations the vertices of both the left-hand-side and
the right-hand-side can naturally be partitioned into those that correspond to vertices in $T_r$ and
those that correspond to vertices in $\bar T$. Graphs $G_r$ and $\bar G$ are the subgraphs induced by these  two
partitions. 

Let us first verify that~\eqref{eq:graphcond} holds for $\bar G$.  By construction, we have that the
total weight $w(U)$ of a subset $U$ of $V(\bar T)$ is equal to $\sum_{u\in U} \bar y_u \bar L(u)$ and the
total weight $w(N(U))$ of its neighborhood in $\bar G$ equals $\sum_{s\in \bar S: d(s, U) \leq 2}
\bar L(s)$. Hence,~\eqref{eq:graphcond} holds since $\bar S$ is a distance-2 transfer of $\bar T$.

We conclude the proof of the claim by verifying~\eqref{eq:graphcond} for $G_r$. As both the
left-hand-side and right-hand-side of $G_r$ correspond to vertices in $T_r$ that all are within
distance $2$ of each other, we have that $G_r$ is a complete bipartite graph.
The total weight of the left-hand-side is by construction $\sum_{u\in T_r} y_u L(u) - \bar y_p \bar
L(p)$ and the total weight of the right-hand-side is $\sum_{s\in T_r \cap S}   L(s) -\bar L(p)
\mathbf{1}_{p \in \bar S}$ which equals $\sum_{s\in S_{\mathsf{commit}}} L(s)$. The claim now
follows from~\eqref{eq:commit}, i.e., that $\sum_{u\in T_r} y_u L(u) - \bar y_p \bar
L(p) \leq \sum_{s\in S_{\mathsf{commit}}} L(s)$.
\end{proofclaim}

The above claim completed the analysis of the algorithm for finding an integral distance-2 transfer of a given
tree instance and Lemma~\ref{l:treealgorithm} follows.

\section{Better preprocessing for better algorithms}\label{sec:better-preprocessing}
In this section, we explore the possibility of a further improvement in the performance guarantee and integrality gap bounds via a better preprocessing. We demonstrate this by presenting a 6-approximation algorithm for the $\zerol$-case of our problem. Formally, this is the special case of the capacitated $k$-center problem in which all the vertex capacities are either $0$ or $L$, for some integer $L$. Instances with this property will be called $\zerol$-instances.

It turns out that instances arising from the NP-hardness results, as well as the gap instances for the standard LP relaxation are all of this form, so this special case seems to capture the essential combinatorial difficulty of the capacitated problem. For these, we prove the following theorem:

\begin{theorem}\label{thm:zerol}
There is a polynomial-time algorithm achieving a $6$-approximation for $\zerol$-instances of the capacitated $k$-center problem.
\end{theorem}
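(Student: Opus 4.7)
The approach is to follow the three-step framework developed for the general case---preprocess, reduce to a tree instance, and round the tree---but with a \emph{strengthened} preprocessing step tailored to $\zerol$-instances. Let $V_L:=\{v\in V:L(v)=L\}$ and $V_0:=V\setminus V_L$. Any feasible LP solution has $y_v=0$ on $V_0$, so opening is supported entirely on $V_L$ and every eventually-open center carries exactly the same capacity $L$. This uniformity is the central lever: in the general analysis, the ``capacity-mismatch'' step in Lemma~\ref{l:redmain2} (the $\argmax$-capacity delegates $m_v$ and the auxiliary vertices $a_v$) introduces slack that can be tightened when the openable capacities all coincide.

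The integrality gap of $\LP_k(G_{\leq\tau})$ after only the Cygan et al.~\cite{CyganHK12} preprocessing is already $7$, so a $6$-approximation cannot round the vanilla LP directly. The essential new ingredient is therefore an additional preprocessing step. I would pursue the following: iteratively identify subsets $U\subseteq V$ witnessing a $\zerol$-specific Hall-type inequality $|U|\leq L\cdot |N^+_G(U)\cap V_L|$ that is tight or near-tight in the LP, and merge the serving neighborhood of each such $U$ into a ``super-cluster'' whose total opening is forced integral at the guess $\tau$; equivalently, augment $G$ with all admissible edges implied inside these super-clusters. A careful accounting needs to show that each such step stays feasible whenever $\tau\geq\OPT$, and that it rules out the configurations producing the $7$-gap.

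With the strengthened instance in hand, the reduction to tree instances goes through as in Section~\ref{sec:gen2tree}, but specialized to the $\zerol$ structure: cluster midpoints are forced into $V_L$, the $\argmax$-capacity step that chooses each delegate $m_v$ becomes vacuous, and the auxiliary-vertex construction can be streamlined, since its sole purpose in Lemma~\ref{l:redmain2} was to absorb capacity mismatches that no longer occur. The net effect should be to shrink the reduction's guarantee from distance-$(3r+2)$ to distance-$(2r+1)$ in $G$. Feeding the resulting tree instance to the best-possible algorithm of Lemma~\ref{l:treealgorithm} at $r=2$ and composing with Lemma~\ref{lem:framework-main}'s final $+1$ then yields a $6$-approximation.

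\paragraph{Main obstacle.} The crux is the additional preprocessing. Because the basic-LP $7$-gap is witnessed by genuine $\zerol$-instances, the new preprocessing must strictly strengthen what the algorithm uses: the hard part is exhibiting a polynomial-time rule that is simultaneously guaranteed correct (feasible when $\tau\geq\OPT$), efficient, and strong enough to let the subsequent clustering and tree-reduction lose only $2$ in admissible distance instead of $3$. A secondary difficulty is handling $V_0$-vertices that are far from any $V_L$-vertex with nonzero LP opening, since they still need to reach an eventually-open center within the tightened distance bound.
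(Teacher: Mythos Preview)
Your overall framing---that a stronger preprocessing must be the lever, since the basic-LP $7$-gap is already a $\zerol$-instance---is correct and matches the paper. But the concrete mechanism you sketch diverges from the paper and rests on a misreading of Lemma~\ref{l:redmain2}.

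First, the preprocessing. The paper's extra step is far simpler than your Hall-type merging: it just deletes every edge joining two $0$-nodes. This is trivially safe (a $0$-node never serves anyone), and its effect is structural rather than LP-theoretic: in the resulting graph no path has two consecutive $0$-nodes, so $L$-nodes are well dispersed. Your proposed rule is not specified enough to evaluate, and you correctly flag it as the main obstacle---but the actual answer is a one-line edge deletion, not an iterative LP-driven merge.

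Second, you misdiagnose the role of the auxiliary vertices $a_v$ in Lemma~\ref{l:redmain2}. They are not there to ``absorb capacity mismatches''; that is handled by the choice of $m_v$. The $a_v$'s exist purely to tighten the \emph{distance} accounting (they sit ``at'' the cluster midpoint so that consecutive tree edges share that midpoint, avoiding the $5+5$ vs.\ $1+3+3+1$ slack). So the uniform-capacity observation alone does not let you strip out that construction or shrink $3r+2$ to $2r+1$.

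Third, and most importantly, the paper does \emph{not} route the $\zerol$-case through tree instances at all. After the $0$--$0$ edge deletion it builds a new clustering in which midpoints may sit at distance $2$ (not only $3$) from previous midpoints---the point being that $\NL(v)\subset C_v$ now suffices for $y(C_v)\geq 1$---and then runs a bespoke bottom-up rounding (Algorithm~\ref{a:rounding} in Appendix~\ref{sec:ap:zerol}) that directly produces an integral distance-$5$ transfer via ``opening movements'' among $L$-nodes. The $6$ is then $5+1$ from Lemma~\ref{lem:framework-main}. The improvement is thus not a tighter black-box reduction fed to Lemma~\ref{l:treealgorithm}; it is a different clustering and a different rounding, both exploiting the dispersion of $L$-nodes that the edge deletion guarantees.
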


\negskip \negskip
\paragraph{General framework revisited.}
Let us recall the preprocessing done by Cygan et al.~\cite{CyganHK12}, explained in Section~\ref{sec:prelim}. The idea is to guess the optimum (call the guess $\tau$), and consider an unweighted graph $G_{\le \tau}$ in which we place an edge between $u,v$ if $d(u,v) \le \tau$. If we then solve the LP on such a graph, the integrality gap is unbounded, as the following example shows. We have two groups of $3$ vertices, such that the distance within the groups is $1$, and distance between the groups is some large $C$. Suppose the capacity of each vertex is $2$, and $k=3$. Then the LP for the instance is feasible with $\tau=1$, while the $\OPT$ is $C$. 

The trick to avoid this situation is to restrict to connected components of $G_{\le \tau}$ defined above, then for each $i$, determine the smallest $k_i$ for which the LP is feasible in component $i$, and finally check if $\sum_i k_i \le k$. (If not, the guess $\tau$ is too small). For connected graphs $G_{\le \tau}$, our main theorem shows that the integrality gap is at most 9. Cygan et al.~\cite{CyganHK12} gave a connected $\zerol$-instance with integrality gap $7$, i.e., the $\LP_k(G_{\leq\tau})$ is feasible with $\tau=1$, while $\OPT$ $\ge 7$. 

So in a nutshell, the steps above can be seen as coming up with a graph (in this case $G_\le \tau$) which has edges between $u,v$ only if it is feasible to assign $u$ to $v$ and vice versa, solving the LP on the connected components of this graph, and verifying that $\sum_i k_i \le k$, as described. The aim of better preprocessing would then be to come up with a graph with even fewer edges, while still guaranteeing that the optimum assignment is preserved. Intuitively, this could produce more connected components, thus the $\sum_i k_i \le k$ condition now becomes {\em stronger}. 

For the $\zerol$-case, we prove that an extremely simple additional preprocessing -- namely removing edges between vertices $u$ and $v$ with $L(u)=L(v)=0$ -- provably lowers the integrality gap. 
Our result is then the following.

\begin{theorem}\label{t:zerolmain}
Suppose $G_{\le \tau}^*$ is a connected component after the two preprocessing steps above, and suppose $\text{LP}_k(G_{\le \tau}^*)$ is feasible, for some $k$. Then there is an algorithm to compute a set of $k$ vertices to open and an assignment of every vertex $u$ to an open center $v$ such that $d(u,v) \le 6$, and the capacity constraints are satisfied.
\end{theorem}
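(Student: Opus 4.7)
My plan is to exploit two structural consequences of the $\zerol$-case preprocessing: (a) since $L(u) = 0$ forces $\sum_v x_{uv} \leq L(u) y_u = 0$ in the LP, every feasible solution has $y_u = 0$ for all $u \in V_0$, so all opening variables are concentrated on the capacity-$L$ vertices $V_L$; and (b) removing all $V_0$--$V_0$ edges makes $V_0$ an independent set in $G^*_{\leq \tau}$, so every edge has at least one $V_L$-endpoint, and every vertex is within $G^*$-distance $1$ of some $V_L$-vertex. By Lemma~\ref{lem:framework-main}, it is enough to exhibit an integral distance-$5$ transfer of $(G^*_{\leq\tau}, L, y)$, which then yields the desired distance-$6$ assignment.

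I would then follow the three-step skeleton of Section~\ref{sec:gen2tree} — clustering, tree instance, reduction back to $G^*_{\leq\tau}$ — but re-engineer each step to save the three units of distance needed to drop from $9$ to $6$. Specifically, I would choose cluster midpoints only from $V_L$: because every vertex has a $V_L$-neighbor, a Khuller--Sussman-type clustering restricted to $V_L$-centered clusters is still well-defined, with every cluster having radius at most $2$ in $G^*_{\leq\tau}$. The resulting cluster tree now connects $V_L$-midpoints via $V_L$--$V_L$ paths; the independence of $V_0$ ensures that a length-$3$ path between two $V_L$-midpoints must contain an intermediate $V_L$-vertex, which can absorb transferred opening. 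This allows me to bound the effective hop cost between adjacent tree midpoints by $2$ rather than $3$, paralleling the $1+3+3+1 \leq 8$ argument sketched in Section~\ref{sec:gen2tree} but sharpened by the $V_L$-intermediates.

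Next I would construct the tree instance on these $V_L$-midpoints plus leaves (the remaining support of $y$, all in $V_L$), aggregate openings to the midpoints as in Phase~1, and invoke Lemma~\ref{l:treealgorithm} to obtain an integral distance-$2$ transfer of the tree instance. The crucial accounting step is translating this back to $G^*_{\leq\tau}$: because midpoints are already in $V_L$, the Phase~3 delegate hop is eliminated (saving $1$), and because tree edges correspond to ``effective'' $G^*$-distance $2$ via the $V_L$-intermediate argument (saving $1$ per hop, i.e., $2$ units over $r=2$), the total transfer distance becomes $1 + 2 \cdot 2 + 0 = 5$. Applying Lemma~\ref{lem:framework-main} to this distance-$5$ transfer gives the claimed distance-$6$ assignment.

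The main obstacle will be to rigorously establish the ``effective distance $2$ per tree hop'' claim. A naive expansion of a distance-$2$ tree transfer gives only a distance-$6$ transfer in $G^*_{\leq\tau}$ ($3 \cdot 2 = 6$ plus the $+1$ of Phase~1), which is not enough. To shave the extra unit, I expect to need a refinement of Claim~\ref{claim:treealgo}: when the tree algorithm transfers opening along a chain of midpoints $m_{u_0}, m_{u_1}, \dots, m_{u_t}$, the Hall-type condition must be verified against balls of radius~$5$ in $G^*_{\leq\tau}$, and the independence of $V_0$ together with the uniform capacity~$L$ lets one re-route transfers through the $V_L$-intermediates so that two consecutive tree hops contribute distance $5$ rather than $6$, analogously to how the auxiliary-vertex trick tightens $10$ to $8$ in the original analysis. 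Making this rerouting consistent across the recursive structure of the tree algorithm — so that the committed set $S_{\mathsf{commit}}$ and the deferred vertex $p$ are chosen compatibly with the $\zerol$ neighborhood structure — is the delicate part and is likely where the ``sophisticated rounding'' promised by the authors lives.
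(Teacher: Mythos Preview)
Your claim (a) is incorrect: the constraint $\sum_v x_{uv}\le L(u)y_u=0$ for a $0$-node $u$ forces $x_{uv}=0$ for all $v$, but says nothing about $y_u$; a feasible LP solution may carry positive opening on $0$-nodes. This is repairable (push such opening to an $L$-neighbor at distance~$1$, using that $V_0$ is independent), but it is not a consequence of feasibility.

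The real gap is that you defer the entire difficulty to an unspecified ``refinement of Claim~\ref{claim:treealgo}.'' You correctly note that pulling back a distance-$2$ tree transfer through distance-$3$ cluster edges yields only a distance-$6$ transfer in $G^*_{\le\tau}$ (hence at best a $7$-approximation), and you hope to recover the missing unit by re-routing through $V_L$-intermediates on the midpoint-to-midpoint paths. No mechanism is actually given, and there is no reason to expect one: an intermediate $L$-vertex on such a path may already have $y=1$ and cannot absorb or relay anything, and the tree algorithm's recursive commitments are oblivious to where those intermediates sit. The paper in fact remarks that the original tree rounding, even with an optimized choice of $m_v$, yields only an $8$-approximation in the $\{0,L\}$-case---so your proposed route, as stated, does not reach~$6$.

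The paper's proof is substantially different and does \emph{not} invoke Lemma~\ref{l:treealgorithm}. It introduces (i) a new clustering (Algorithm~\ref{a:cluster}) in which each new midpoint is chosen at distance exactly~$2$, not~$3$, from the already-allotted set---this works because only $\NL(v)\subset C_v$ is needed to guarantee $y(\NL(v))\ge 1$---and midpoints are allowed to be $0$-nodes, with a designated aggregation point $p(v)\in\NL(v)$ satisfying $d(p(v),\pi_1(v))\le 2$ (and $=1$ when $v$ is a $0$-node); and (ii) a bespoke bottom-up rounding (Algorithm~\ref{a:rounding}) that partitions the fractional vertices attached to a cluster into groups $\{X_u\}_{u\in\NL(v)}$ indexed by the midpoint's $L$-neighbors, rounds each $X_u$ locally, aggregates the residual fractions into $\NL(v)\setminus\{p(v)\}$, and borrows from $p(v)$ only for the last center. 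A direct case analysis (Lemma~\ref{l:6key}) shows every opening movement has length at most~$5$, giving an integral distance-$5$ transfer and hence the $6$-approximation via Lemma~\ref{lem:framework-main}. The savings come from the distance-$2$ clustering and the two-level $(X_u,\NL(v))$ rounding granularity, not from a sharper analysis of the generic tree algorithm.
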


The preprocessing leads to additional structure in the instance which we then use carefully in our rounding procedure. The proof is presented in Appendix~\ref{sec:ap:zerol}. A natural open question is whether such an approach can be applied to the general problem as well, improving our 9-approximation algorithm.

\section{Extensions to other problems and future directions}\label{sec:extfd}
Our techniques can be extended to obtain approximation algorithms for other problems.
Appendix~\ref{sec:ext} discusses two problems to which our techniques readily apply: first we study the capacitated
$k$-supplier problem -- a variant of $k$-center where the set of clients and facilities are specified separately -- 
and give an 11-approximation algorithm. We then consider the budget generalization of the $k$-center
problem, where the general capacity problem is inapproximable but we give a 9-approximation
algorithm when the capacities are uniform. We see this as further evidence  that the simplicity of our
approach helps in designing better algorithms also for other location problems.

As our $9$-approximation algorithm comes close to settling the integrality gap, it is natural to ask
if our techniques can be used  to obtain a tight result.
 Recall that our framework consists of first reducing the general problem to tree instances and then
solving such instances. Since our algorithm for tree instances is optimal, any potential improvement
must come from the reduction, and we raise this as an open problem. 

Finally, our preliminary results on additional preprocessing indicate that further investigation is
necessary to understand if these techniques can help bring down the integrality gap to the tight
factor of $3$. More generally, we believe that it is important not only for capacitated $k$-center
but also for other problems, such as facility location and $k$-median, to understand the power of
lift-and-project methods (applied to potentially different formulations). For example, do they
automatically capture these preprocessing steps and lead to stronger formulations?

\bibliographystyle{plain}
\bibliography{refs}

\appendix
\section{Extensions to other problems}\label{sec:ext}
We believe that the simplicity of our approach could be key to generalizing it to other location
problems with capacity constraints. In this section, we see how our ideas readily
apply to two problems.

\subsection{Capacitated $k$-supplier}
In this subsection, we present a $11$-approximation algorithm for the capacitated $k$-supplier problem.
This problem is a generalization of the capacitated $k$-center problem in which some vertices
  are designated clients and some facilities. We can only open $k$ of the facilities, and the aim is
  to serve the clients (facilities do not have to be {\em served}).
  
Let us denote by $\calC$ and $\calF$ the set of clients and facilities respectively. For this version, we prove the following.

\begin{theorem}\label{thm:clientfacility}
There exists a polynomial time $11$-approximation algorithm for the capacitated $k$-supplier problem.
\end{theorem}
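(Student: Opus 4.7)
The plan is to reuse the three-step framework of the $k$-center proof: formulate and preprocess the natural LP relaxation, reduce to a tree instance, and apply the tree-rounding algorithm of Section~\ref{sec:tree-routing-alg}. The additional loss of $2$ (giving $11$ instead of $9$) should come from the asymmetry between clients and facilities: only facilities can be opened, only clients must be served, and the tight coverage constraint $\sum_{f} x_{fv} = 1$ is available only at $v \in \mathcal{C}$.

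First I would set up the natural bipartite LP with variables $y_f$ for $f \in \mathcal{F}$ and $x_{fv}$ for $(f,v) \in \mathcal{F}\times\mathcal{C}$, the budget $\sum_f y_f = k$, and the usual capacity/coverage inequalities restricted to the bipartite setting. After guessing $\tau$, I would apply the component-wise preprocessing of Cygan et al.~\cite{CyganHK12} on $G_{\le\tau}$ to obtain a lower bound $\tau^* \le \OPT$, and then verify a direct analog of Lemma~\ref{lem:framework-main}: an integral distance-$r$ transfer $S \subseteq \mathcal{F}$ of the LP opening vector yields, via Hall's theorem on the capacity-duplicated bipartite matching instance between $\mathcal{C}$ and $S$, an assignment of every client to some $s\in S$ at hop-distance at most $r+1$ in $G_{\le\tau^*}$.

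Second I would reduce to a tree instance following Section~\ref{sec:gen2tree}. The key subtlety is that the identity $\sum_{f\in N^+(v)\cap\mathcal{F}} y_f \ge 1$ used to consolidate one unit of opening per cluster holds only when $v$ is a client, so I would adapt the Khuller-Sussmann clustering of Lemma~\ref{l:clustering} to produce client midpoints $\Gamma \subseteq \mathcal{C}$. For each $v\in\Gamma$ I would select a delegate facility $m_v := \argmax_{f\in N^+_G(v)\cap\mathcal{F}} L(f)$, introduce an auxiliary facility $a_v$ adjacent to $N^+_G(v)$ with capacity $L(m_v)$, aggregate the full unit of opening in $N^+_G(v)\cap\mathcal{F}$ onto $a_v$, build a tree whose non-leaves are the $a_v$'s and whose leaves are the remaining fractionally opened facilities, invoke Lemma~\ref{l:treealgorithm} to obtain an integral distance-$2$ transfer of this tree, and finally transfer each chosen $a_v$ to $m_v$ before projecting away the auxiliary vertices.

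The hard part will be the distance bookkeeping. Because midpoints are clients and tree vertices are facilities, the ``horizontal'' tree edges between neighboring auxiliaries $a_u, a_v$ necessarily pass through at least one intermediate client-facility segment, making the corresponding augmented-graph distance somewhat longer than in the symmetric $k$-center case; pinpointing exactly how much longer, and in which of the three phases the two extra units appear, is the principal technical challenge. I expect the three phases to compose into an integral distance-$(3r+4)$ transfer of $(G,L,y)$, which together with the $+1$ from the Hall-type assignment yields $3r+5 = 11$ for $r=2$. Establishing this constant tightly will probably require exploiting shortcut paths through the auxiliary facilities themselves, analogous to the identity $d_{\bar G}(a_u, a_v) = d_G(u,v)$ used in the proof of Lemma~\ref{l:redmain2}, so that the bipartite parity of client/facility hops does not inflate the bound further.
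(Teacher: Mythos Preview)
Your high-level plan is essentially the paper's: preprocess via connected components of the bipartite graph, cluster with \emph{client} midpoints so that $\sum_{f\in N^+(v)\cap\calF} y_f \ge 1$ holds per cluster, reduce to a tree instance with the auxiliary-vertex trick, and apply Lemma~\ref{l:treealgorithm}.

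The one place your reasoning goes astray is the distance bookkeeping. Because the graph you work in is bipartite between $\calC$ and $\calF$, any two client midpoints are at \emph{even} distance; you cannot adapt Khuller--Sussmann to produce client midpoints at distance~$3$ while keeping their facility-neighborhoods disjoint. The paper therefore clusters with adjacent midpoints at distance exactly~$4$: one repeatedly picks a new client at distance $>2$ from all existing midpoints but exactly $4$ from some existing one, then sweeps the remaining clients (at distance~$2$) and facilities (at distance~$3$) into the nearest cluster. Consequently, horizontal tree edges $a_u$--$a_v$ have $\bar G$-length~$4$, while leaf edges $a_v$--$f$ have length at most~$3$ (a facility in $C_v$ lies at distance $1$ or $3$ from the client midpoint~$v$). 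Phases~$1$ and~$3$ remain distance-$1$ transfers exactly as in the $k$-center reduction, so the three phases compose to a distance-$(4r+2)$ transfer, and with the Hall-matching step one obtains a $(4r+3)$-approximation, which is~$11$ for $r=2$. Your anticipated decomposition $(3r+4)+1$ agrees with this only by the numerical coincidence $3\cdot 2 + 4 = 4\cdot 2 + 2$; the two extra units over the $k$-center bound come entirely from the longer midpoint spacing in phase~$2$, not from phases~$1$ and~$3$ as you conjecture.
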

The algorithm proceeds along the lines of our main result. We first guess the optimum $\tau$, and restrict to the bipartite graph $G$ on vertex sets $\calC, \calF$, with an edge between $u \in \calC$ and $v \in \calF$ iff $d(u,v) \le \tau$. We then divide this into connected components and work with them separately, as before. Thus in what follows, let us assume that $G$ as defined above is connected, and $LP_k(G)$ is feasible. Note that this is a slightly different LP, where $y$-variables exist only for facilities, and the constraints $\sum_{u:(u,v) \in E} x_{uv} =1$ exist only for the clients.

The main difference in this variant is in the clustering step. This now works as follows. Start with a client $u \in \calC$, and include all of $N^+(u)$ in the cluster $C_u$. Now as long as possible, do the following: pick a client $u \in \calC$ which is at a distance $>2$ from the midpoints of all the clusters so far, but is distance precisely $4$ from some cluster midpoint; include all of $N^+(u)$ into the cluster $C_u$ (there will not be an overlap with other clusters because of the distance condition).

When the procedure ends, we will be left with a bunch of clients at distance $2$ from some cluster midpoints, and some facilities at distance $3$ from some cluster midpoints (and nothing else, by connectivity properties). We move them to the closest cluster (breaking ties arbitrarily). Now the procedure satisfies the following conditions:
\begin{enumerate}
\item Each cluster has its $y$-values adding up to $\ge 1$ (indeed, the neighborhood of the cluster midpoint has total $y$-value $\ge 1$, as is required in the tree reduction).
\item The graph of clusters, in which we place an edge if the midpoints are at distance precisely $4$, is connected.
\end{enumerate}

These properties ensure that we can perform precisely the same reduction to tree instances, however we have a variant of Lemma~\ref{l:redmain1}: an $r$-transfer to the tree instance now implies a $(4r+3)$ approximation algorithm for the client/facility problem. This is because adjacent cluster midpoints are at a distance $4$, and hence the distance in $G$ between two vertices $a_u$ and $a_v$ (as in the reduction) which have distance $r$ in the tree instance, is now $4r$. The rest of the proof carries over verbatim, and we obtain a reduction to tree instances with the above guarantee.

This proves Theorem~\ref{thm:clientfacility}, because for tree instances, we can use our algorithm which gives $r=2$. \qed

\subsection{Budgeted version with uniform capacities}

The \emph{budgeted center problem} is a weighted generalization of the $k$-center problem: in the $k$-center problem, opening a center incurs the uniform cost of one and there is a budget of $k$ on the total opening cost; on the other hand, in the budgeted center problem, the opening costs are given by $C:V\to\mathbb{R}_+$ that is a part of the input along with the total budget $B\in\mathbb{R}_+$. It is NP-hard to approximate this problem to any approximation ratio if the vertices have general capacity; this can be shown by a straightforward reduction from the Knapsack Problem. However, for the uniform capacity, Khuller and Sussmann~\cite{KhullerS00}, using the technique of Bar-Ilan, Kortsarz, and Peleg~\cite{BKP}, gives a 13-approximation algorithm. In this subsection, we present a 9-approximation algorithm for the budgeted center problem with uniform capacities. We note that it is easy to extend this result to the $\zerol$-case as well.

Let $L_0\in\mathbb{N}$ be the uniform capacity. Following is the key lemma of our analysis:
\begin{lemma}\label{l:budgetmain}
Suppose there exists a polynomial-time algorithm that finds an integral distance-$r$ transfer of a tree instance. Then there exists an algorithm that, given a connected graph $G=(V,E)$, the constant capacity function $L:V\to\{L_0\}$, and $k\in\mathbb{N}$ for which $\LP_k(G)$ has a feasible solution $(x,y)$, in addition to the opening costs $C:V\to\mathbb{R}_+$, finds an integral distance-$(3r+2)$ transfer $y'$ of $(G,L,y)$ satisfying $\sum_{v\in V}C(v)y'_v \leq \sum_{v\in V}C(v)y_v$.
\end{lemma}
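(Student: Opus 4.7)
The plan is to mirror the three-phase reduction in the proof of Lemma~\ref{l:redmain2}---augment with auxiliary vertices, aggregate openings, solve a tree instance, delegate back---while adapting each phase to also preserve cost. Since all capacities equal $L_0$, the choice of the delegate $m_v$ for each cluster $C_v$ is no longer constrained by capacity; I would instead set $m_v := \argmin_{u \in N^+_G(v)} C(u)$ and assign $C(a_v) := C(m_v)$ to the auxiliary vertex. Phase~1 then proceeds exactly as before (first depleting $y_{m_v}$, then the remaining $y_u$ for $u \in N^+_G(v)$), and because every $u \in N^+_G(v)$ satisfies $C(u) \ge C(m_v) = C(a_v)$, each unit of opening mass transferred into $a_v$ replaces mass at a neighbor of no-smaller cost, yielding
\[
\sum_v C(v) y_v^{\textsf{first}} + \sum_v C(a_v) y_{a_v}^{\textsf{first}} \leq \sum_v C(v) y_v.
\]
Phase~3 preserves cost trivially since $C(a_v) = C(m_v)$.

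For Phase~2, I would use a cost-preserving variant of the tree algorithm of Lemma~\ref{l:treealgorithm}, obtained by replacing ``highest capacity'' with ``lowest cost'' throughout the recursion. When $Y := \sum_i y_{v_i}$ is an integer, this variant opens the $Y+1$ cheapest vertices of $T_r$; a rearrangement argument (using $\sum_{v \in T_r} y_v = Y+1$ and $y_v \leq 1$, so that any mass above the chosen cost-cutoff lies on an at-least-as-expensive vertex as any slack below it) bounds the cost by $\sum_{v \in T_r} C(v) y_v$. When $Y$ is non-integer, $S_{\mathsf{commit}}$ is the $\lfloor Y \rfloor$ cheapest children together with the cheaper of $\{r, v_{\lfloor Y \rfloor + 1}\}$; the placeholder $p$ inherits $\bar C(p) := \max[C(r), C(v_{\lfloor Y \rfloor + 1})]$ and $\bar y_p := Y - \lfloor Y \rfloor$, and the analogous inequality
\[
\sum_{v \in S_{\mathsf{commit}}} C(v) + \bar C(p) \bar y_p \leq \sum_{v \in T_r} C(v) y_v
\]
follows by the same rearrangement argument. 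Induction on $|\bar T|$ combines this with the inductive cost bound on $\bar T$ to establish cost preservation for the full tree.

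Composing the three phases yields an integral distance-$(3r+2)$ transfer $y'$ of $(G,L,y)$ satisfying $\sum_v C(v) y'_v \leq \sum_v C(v) y_v$, as required. The main technical obstacle is the non-integer $Y$ case of the tree recursion: the placeholder cost $\bar C(p)$ must be set so that, whichever branch of the deferred decision the $\bar T$-recursion commits to, the total cost charged to $T_r$ stays within the LP budget on $T_r$. Assigning $\bar C(p)$ to the \emph{larger} of $C(r), C(v_{\lfloor Y \rfloor + 1})$ while committing to the \emph{cheaper} one in $S_{\mathsf{commit}}$ is precisely what makes this accounting close.
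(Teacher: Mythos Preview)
Your proposal is correct, but the paper takes a slicker route that avoids re-proving anything about the tree algorithm. Instead of explicitly replacing ``highest capacity'' by ``lowest cost'' throughout and re-establishing the key inequality at each recursive step, the paper runs the \emph{unmodified} reduction of Lemma~\ref{l:redmain2} with a fake capacity function $\hat L(v):=\bar C_{\max}-C(v)$. The transfer condition~(\ref{def:transfer}\ref{def:transfer2}) applied with $U=V$ immediately yields $\sum_v \hat L(v)y'_v\ge\sum_v \hat L(v)y_v$, which unwinds to the desired cost bound; and since the true capacities are all equal to $L_0$, every capacity comparison in the procedure is a tie, so the execution under $\hat L$ is simultaneously a valid execution under $L$, giving the distance-$(3r+2)$ transfer for $(G,L,y)$ for free. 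Your approach and the paper's are really the same idea at the core---exploit uniform capacities to re-purpose the capacity-ordering as a cost-ordering---but the fake-capacity trick packages it in two lines rather than a full re-analysis of the recursion. What your more explicit route buys is transparency: one sees directly why each phase is cost-nonincreasing, whereas the paper's argument leans on the (true but somewhat implicit) observation that ``the decisions made by our rounding procedure purely depend on the relative ordering of capacities.'' Note that both arguments tacitly look inside the black-box tree algorithm rather than treating it as opaque, so the conditional phrasing of the lemma is a bit loose in either version.
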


Our 9-approximation algorithm follows from Lemma~\ref{l:budgetmain}.
\begin{theorem}\label{t:budgetmain}
There exists a 9-approximation algorithm for the budgeted center problem with uniform capacities.
\end{theorem}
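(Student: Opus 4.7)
The plan is to derive Theorem~\ref{t:budgetmain} from Lemma~\ref{l:budgetmain} by specializing the framework of Section~\ref{sec:prelim} to the budgeted setting, replacing the cardinality constraint $\sum_v y_v = k$ by the cost constraint $\sum_v C(v) y_v \leq B$. First I would guess a threshold $\tau$ on the optimum value, form the unweighted graph $G_{\leq\tau}$, and split it into its connected components $G_1, G_2, \ldots$. For each component $G_i$ and each integer $k_i \in \{0, 1, \ldots, |V(G_i)|\}$, I would solve the LP given by the constraints of $\LP_{k_i}(G_i)$ with objective $\min \sum_{v \in V(G_i)} C(v) y_v$, letting $B_i(k_i)$ denote its optimum (and $+\infty$ if infeasible). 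Taking $k_i^* \in \argmin_{k_i} B_i(k_i)$ together with a corresponding optimal LP solution $(x^{(i)}, y^{(i)})$ selects a cost-minimizing fractional opening pattern per component.

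Next I would apply Lemma~\ref{l:budgetmain}, instantiated with $r=2$ via the tree algorithm of Lemma~\ref{l:treealgorithm}, to each component independently; this produces an integral distance-$8$ transfer $\tilde y^{(i)}$ of $(G_i, L, y^{(i)})$ whose cost $\sum_v C(v) \tilde y^{(i)}_v$ is at most $B_i(k_i^*)$. Concatenating the $\tilde y^{(i)}$ across components and then invoking Lemma~\ref{lem:framework-main} yields a feasible capacitated assignment in which every vertex reaches an open center at distance at most $9$ in $G_{\leq\tau}$, hence at most $9\tau$ in the original metric. I would declare the guess $\tau$ accepted if $\sum_i B_i(k_i^*) \leq B$ and rejected otherwise; the final output is the solution associated with the smallest accepted $\tau$, enumerated over the $O(n^2)$ distinct pairwise distances.

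For correctness I need to verify that every $\tau \geq \OPT$ is accepted. If an optimum integer solution opens a set $S^* \subseteq V$ of cost at most $B$ with value $\leq \tau$, then the restriction of $S^*$ to each $V(G_i)$ gives a feasible integer solution to $\LP_{|S^* \cap V(G_i)|}(G_i)$, so $B_i(k_i^*) \leq \sum_{v \in S^* \cap V(G_i)} C(v)$; summing across components yields $\sum_i B_i(k_i^*) \leq \sum_{v \in S^*} C(v) \leq B$, so the guess is accepted. Combined with the distance bound of $9\tau$, this establishes the $9$-approximation. There is no serious obstacle beyond Lemma~\ref{l:budgetmain} itself: the budget objective separates cleanly across components, so the per-component enumeration over $k_i$ is routine; the only substantive difference from the reduction of Lemma~\ref{l:redmain1} is that we minimize cost rather than just verify feasibility when choosing $k_i$ in each component, and this is what the cost-preservation clause of Lemma~\ref{l:budgetmain} is tailored to support.
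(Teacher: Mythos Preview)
Your proposal is correct and follows essentially the same approach as the paper: guess $\tau$, decompose $G_{\leq\tau}$ into connected components, for each component minimize the LP opening cost over all choices of $k_i$, accept $\tau$ iff the sum of these minimum costs is at most $B$, and then apply Lemma~\ref{l:budgetmain} (with $r=2$ from Lemma~\ref{l:treealgorithm}) together with Lemma~\ref{lem:framework-main} to obtain a feasible assignment of distance at most $9$ and total cost at most $B$. Your correctness argument (restricting an optimal integral solution to each component to show $\sum_i B_i(k_i^*)\leq B$ whenever $\tau\geq\OPT$) is exactly the reasoning the paper sketches more tersely.
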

\begin{proof}
Let $\OPT$ denote the optimal solution value. As in Lemma~\ref{l:preprocessing}, our algorithm makes a guess $\tau$ at the optimal solution value and tries to decide if $\tau<\OPT$. In this problem again, we consider the graph $G_{\leq\tau}$ representing the admissible assignments. Consider the connected components of $G_{\leq\tau}$; for each component $G_i$, we will compute a lower bound $B_i$ on the minimum budget necessary to have a feasible solution to the subproblem induced by $G_i$. Observe that, if $\tau\geq\OPT$, an optimal solution assigns every vertex to a center that is in the same connected component. Thus, if $\sum_i B_i > B$, we can certify that $\tau<\OPT$. $B_i$ is determined by solving $\LP_{k_i}(G_i)$, but with an objective of minimizing the opening cost $\sum_{v\in G_i} C(v)y_v$ rather than as a feasibility LP with no objective function; $k_i$ is chosen by trying all integers from 1 to $|V(G_i)|$ and selecting the one that gives the smallest opening cost. If we failed to certify $\tau<\OPT$, this means $\sum_i B_i \leq B$. Now for each $G_i$, Lemmas~\ref{lem:framework-main} and \ref{l:budgetmain} lets us find a set of vertices to open for which there exists an assignment of every vertex to an open center that is within the distance of $3r+3$, and the total opening cost of this set is no greater than $B_i$. The union of these sets is the desired solution from the triangle inequality. Recall that $r$ can be taken as two, from Lemma~\ref{l:treealgorithm}.
\end{proof}

\begin{proof}[Proof of Lemma~\ref{l:budgetmain}]
We invoke the rounding procedure given in Section~\ref{sec:gen2tree}, but with the ``fake'' capacity function $\hat L$ defined as $\hat L(v):=\bar C_{\max}-C(v)$, where $\bar C_{\max}:=1+\max_{v\in V}C(v)$. The output vector $y'$ is an integral distance-$(3r+2)$ transfer of $(G,\hat L,y)$ from Lemma~\ref{l:redmain2}. Since $y'$ is a distance-$(3r+2)$ transfer, we have $\sum_{v\in V}y_v = \sum_{v\in V}y'_v = k$, and by taking $U=V$ in Condition~(\ref{def:transfer}\ref{def:transfer2}) of Definition~\ref{def:transfer}, we also have $\sum_{v\in V}\hat L(v)y'_v\geq \sum_{v\in V}\hat L(v)y_v$. Since $\sum_{v\in V}\hat L(v)y'_v=\bar C_{\max}\cdot k-\sum_{v\in V}C(v)y'_v$ and $\sum_{v\in V}\hat L(v)y_v=\bar C_{\max}\cdot k-\sum_{v\in V}C(v)y_v$, this implies that $\sum_{v\in V}C(v)y'_v\leq \sum_{v\in V}C(v)y_v$.

On the other hand, one can see that the decisions made by our rounding procedure purely depend on the relative ordering of capacities, rather than their actual values. Hence, the complete ``execution history'' of the rounding procedure with $\hat L$ could also be interpreted as a valid execution history with the true capacity function $L$ as well: if the procedure is executed with $L$, every comparison of capacities will always be a tie since $L$ is a constant function, and we can break them so that it will be consistent with the ordering of $\hat L$. Therefore, it is possible that our rounding algorithm outputs $y'$ when it is run with $L$, and from Lemma~\ref{l:redmain2}, $y'$ is an integral distance-$(3r+2)$ transfer of $(G,L,y)$.
\end{proof}

\section{6-approximation algorithm for the $\zerol$-case}\label{sec:ap:zerol}
In this section, we present the 6-approximation algorithm for the $\zerol$-case by proving Theorem~\ref{t:zerolmain}. We call a vertex a 0-node if its capacity is zero; an $L$-node otherwise. Let $V_L$ denote the set of $L$-nodes. $\NL(v)$ denotes $\NN(v)\cap V_L$. Let $G=(V,E)$ denote the connected component $G^*_{\leq\tau}$ after the two preprocessing steps described in Section~\ref{sec:better-preprocessing}.

Recall that the 9-approximation algorithm rounds the opening variables of the LP solution ``locally'': it considers the tree of clusters in the bottom-up fashion, and for each subtree $T_u$, it opens $\lfloor y(T_u)\rfloor$ centers while deferring the decision of whether to open one additional center to the later subinstances. Our 6-relaxed decision procedure also operates as a bottom-up local rounding procedure, but in this case, our preprocessing ensures that a path from (the midpoint of) a child cluster to (the midpoint of) the parent does not contain consecutive 0-nodes; this implies that $L$-nodes are very well ``dispersed'' throughout the graph, permitting local rounding to be performed at a finer granularity within closer proximity. In fact, even without such change in the granularity of rounding, a careful choice of $m_v$ alone with the original rounding algorithm is sufficient to give a 8-relaxed decision procedure.

Further improvements are facilitated by a better clustering. The clustering algorithm of Khuller and Sussman~\cite{KhullerS00} that is used by our 9-approximation algorithm finds cluster midpoints that are connected by length-three paths. This is in order to guarantee that $y(C_v)\geq 1$ for each cluster $C_v$, by ensuring $\NN(v)\subset C_v$. However, in a $\zerol$-instance, $\NL(v)\subset C_v$ is sufficient to yield $y(\NN(v)\cap C_v)\geq 1$, and hence we can choose two vertices that are at distance 2 as cluster midpoints as long as all their common neighbors are 0-nodes. This observation leads to an improved clustering where some parent and child can be closer.

\paragraph{Clustering algorithm.}
Our clustering algorithm identifies clusters one by one, and each time a new cluster midpoint $v$ is identified, $\NL(v)$ is allotted to the new cluster $C_v$. The next cluster midpoint is always chosen at distance 2 from the set of already allotted vertices to ensure that $\NL(v)$ of each cluster are disjoint. In what follows, $V_{\textnormal{\textsf{allotted}}}$ denotes the set of vertices that has been already allotted to a cluster by the algorithm; for $u\in V_{\textnormal{\textsf{allotted}}}$, $\alpha(u)$ denotes the midpoint of the cluster that $u$ is allotted to: $u\in C_{\alpha(u)}$; finally, $\textsf{dist}(v)$ denotes the shortest distance from $V_{\textnormal{\textsf{allotted}}}$ to $v$: $\textsf{dist}(v):=\min_{u\in V_{\textnormal{\textsf{allotted}}}} d_G(u,v)$.

Algorithm~\ref{a:cluster} shows our clustering algorithm. In addition to identifying the clusters, our algorithm chooses $p(v)\in C_v$ for each cluster $C_v$, on which the opening of one will be aggregated. Also, for each non-root cluster $C_v$, the algorithm finds a vertex in the parent cluster through which $v$ is connected to the parent cluster and call it $\pi_1(v)$. At the end of the algorithm, we assign every unallotted $L$-node to a nearby cluster; the algorithm annotates each of these vertices with $\pi_2(v)$, where $\pi_2(v)$ denotes the vertex through which $v$ is connected to $\alpha(v)$. 

\begin{algorithm}
\caption{Clustering algorithm.}
\label{a:cluster}
\begin{algorithmic}[1]
	\State $V_{\textnormal{\textsf{allotted}}}\gets\emptyset$\label{st:a:c:1}
	\State Let $v$ be an arbitrary $L$-node\label{st:a:c:2}
	\State Create a new cluster centered at $v$: $C_v\gets \NL(v)$; $V_{\textnormal{\textsf{allotted}}}\gets V_{\textnormal{\textsf{allotted}}}\cup C_v$\label{st:a:c:3}
	\State $p(v)\gets v$\label{st:a:c:4}
	\While{$\exists w\in V_L \ \textsf{dist}(w)\geq 2$}\label{st:a:c:5}
		\State Let $v\in V$ be an arbitrary vertex with $\textsf{dist}(v)=2$\label{st:a:c:6}
		\State $u^*\in\argmin_{u\in V_{\textnormal{\textsf{allotted}}}} d_G(u,v)$\label{st:a:c:7}
		\State Create a new cluster centered at $v$, as a child of $C_{\alpha(u^*)}$:
		\Statex \hspace{1.2em} $C_v\gets \{v\}\cup \NL(v)$; $V_{\textnormal{\textsf{allotted}}}\gets V_{\textnormal{\textsf{allotted}}}\cup C_v$\label{st:a:c:8}
		\State $\pi_1(v)\gets u^*$\label{st:a:c:9}
		\If {$v$ is an $L$-node} \hspace{2pt}$p(v)\gets v$ \textbf{else} $p(v)$ is arbitrarily chosen from $\NN(v)\cap \NN(u^*)$ \label{st:a:c:10}\EndIf
	\EndWhile
	\State $V^*_{\textnormal{\textsf{allotted}}}\gets V_{\textnormal{\textsf{allotted}}}$\label{st:a:c:11}
	\ForAll{$v\in V_L\setminus V^*_{\textnormal{\textsf{allotted}}}$}\label{st:a:c:12}
		\State Let $u$ be an arbitrary vertex in $V^*_{\textnormal{\textsf{allotted}}} \cap \NN(v)$\label{st:a:c:13}
		\State $C_{\alpha(u)}\gets C_{\alpha(u)} \cup\{v\}$\label{st:a:c:14}
		\State $\pi_2(v)\gets u$\label{st:a:c:15}
	\EndFor
\end{algorithmic}
\end{algorithm}

\begin{lemma}\label{l:cl}
Algorithm~\ref{a:cluster} is well-defined, and its output satisfies the following:\begin{enumerate}[(i)]
\item $\NL(v)\subset C_v$ for every $C_v$, and $C_v$'s are disjoint;\label{p:l:cl:1}
\item every $L$-node is allotted to some cluster, and a 0-node is allotted only when it becomes a cluster midpoint;\label{p:l:cl:2}
\item $p(v)\in \NL(v)$ for every $C_v$;\label{p:l:cl:3}
\item $\pi_1(v)$, when defined, is in $\NL(x)$ for some $C_x$; $\pi_2(v)$, when defined, is in $\NL(y)$ for some $C_y$;\label{p:l:cl:range}
\item $C_v=\begin{cases}
\NL(v)\cup\{u\mid\pi_2(u)\in \NL(v)\},&\textrm{if }v\textrm{ is an }L\textrm{-node};\\
\{v\}\cup \NL(v)\cup\{u\mid\pi_2(u)\in \NL(v)\},&\textrm{if }v\textrm{ is a 0-node}.
\end{cases}$\label{p:l:cl:part}
\end{enumerate}
\end{lemma}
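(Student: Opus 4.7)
My plan is to verify the five properties along with well-definedness by traversing the pseudocode, using the preprocessing property --- that edges between pairs of 0-nodes have been removed --- as the one nontrivial ingredient. The main obstacle I anticipate is a structural lemma asserting that the greedy choice $u^*$ in Step~\ref{st:a:c:7} is always an $L$-node, since this is the only place where the $\zerol$-specific preprocessing genuinely interacts with the greedy distance-$2$ advance; everything else then reduces to straightforward bookkeeping.

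First I would settle well-definedness. Step~\ref{st:a:c:2} requires some $L$-node, which must exist since $\LP_k(G)$ is feasible. Step~\ref{st:a:c:7} is trivial once $V_{\textnormal{\textsf{allotted}}}$ is nonempty. Step~\ref{st:a:c:10} demands $\NN(v)\cap\NN(u^*)\ne\emptyset$ when $v$ is a 0-node, which follows because $d_G(u^*,v)=2$ guarantees that any interior vertex of a shortest $u^*$-to-$v$ path lies in this intersection. Step~\ref{st:a:c:13} succeeds because, once the loop exits, every unallotted $L$-node must have $\textsf{dist}=1$ (it cannot be $0$, lest it already be allotted), so it has an allotted neighbor.

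Next I would prove the structural lemma. Suppose for contradiction that $u^*$ is a 0-node when Step~\ref{st:a:c:7} fires. In parallel with property~(\ref{p:l:cl:2}), which I prove in lockstep, $u^*$ can only have been allotted as a cluster midpoint in Step~\ref{st:a:c:8}, at which moment $C_{u^*}$ absorbed all of $\NL(u^*)$. Because 0-0 edges were removed in preprocessing, every neighbor of the 0-node $u^*$ is an $L$-node, and hence already belongs to $V_{\textnormal{\textsf{allotted}}}$. But then any shortest path $u^*$-$w_1$-$v$ of length two has $w_1\in V_{\textnormal{\textsf{allotted}}}$, giving $\textsf{dist}(v)\le 1$ and contradicting $\textsf{dist}(v)=2$.

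The remaining properties then follow by direct inspection. (\ref{p:l:cl:1}) holds because each $C_v$'s initializer in Steps~\ref{st:a:c:3}/\ref{st:a:c:8} already contains $\NL(v)$, and disjointness is preserved by the $\textsf{dist}(v)=2$ check (all vertices within distance $1$ of a new midpoint are still unallotted, by the triangle inequality). (\ref{p:l:cl:2}) follows because $\NL$ and Step~\ref{st:a:c:14} only insert $L$-nodes, so 0-nodes enter $V_{\textnormal{\textsf{allotted}}}$ only when they are explicitly chosen as midpoints. For (\ref{p:l:cl:3}) in the 0-node case, $v\notin\NN(u^*)$ since $d_G(u^*,v)=2$, so $p(v)$ is a genuine neighbor of the 0-node $v$ and hence an $L$-node by preprocessing. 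Property~(\ref{p:l:cl:range}) for $\pi_1$ combines the structural lemma with the observation that an $L$-node allotted in Steps~\ref{st:a:c:3}/\ref{st:a:c:8} must lie in $\NL$ of its midpoint; for $\pi_2$, the same reasoning shows the $u$ chosen in Step~\ref{st:a:c:13} cannot be a 0-node midpoint (else $v\in\NL(u)\subset C_u$ would already belong to $V^*_{\textnormal{\textsf{allotted}}}$), so $u$ is an $L$-node in $\NL(\alpha(u))$. Finally, (\ref{p:l:cl:part}) is obtained by reading off exactly what enters $C_v$: the initializer contributes $\NL(v)$ (plus $v$ itself when $v$ is a 0-node), and Step~\ref{st:a:c:14} adds precisely those $u$ with $\pi_2(u)\in C_v\cap V_L=\NL(v)$.
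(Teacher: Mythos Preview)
Your proposal is correct and follows essentially the same approach as the paper's proof. The only organizational difference is that you isolate the key observation---that the greedy choice $u^*$ in Step~\ref{st:a:c:7} must be an $L$-node---as a separate ``structural lemma,'' whereas the paper embeds this contradiction argument inline while verifying property~(\ref{p:l:cl:range}); the underlying reasoning (a 0-node midpoint would have all its $L$-neighbors already allotted, forcing $\textsf{dist}(v)\le 1$) is identical. One small omission: you do not explicitly argue that Step~\ref{st:a:c:6} can always find a vertex with $\textsf{dist}(v)=2$, which the paper handles by taking an intermediate vertex on a shortest path to the witness $w$ with $\textsf{dist}(w)\ge 2$; this is routine but worth stating.
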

\begin{proof}
Since $\LP_k(G)$ is feasible, $V_L\neq\emptyset$ and $v$ can be chosen at Step~\ref{st:a:c:2}. At Step~\ref{st:a:c:6}, as there exists $w\in V_L$ with $\textsf{dist}(w)\geq 2$, there exists a vertex $v$ with $\textsf{dist}(v)=2$, for example the one that appears on a path of length $\textsf{dist}(w)$ from $V_{\textnormal{\textsf{allotted}}}$ to $w$. Note that $v\in V$ may be a 0-node or an $L$-node. At Step~\ref{st:a:c:10}, $d_G(u^*,v)=2$ from the choice of $u^*$ and hence $\NN(v)\cap \NN(u^*)$ is nonempty. When the \mbox{\textbf{while}} loop terminates, $\textsf{dist}(w)\leq 1$ for every $w\in V_L$; thus, $v$ at Step~\ref{st:a:c:13} satisfies $\textsf{dist}(v)=1$ and therefore $u$ can be chosen. The algorithm is well-defined.

Each time a new cluster $C_v$ is created, $\NL(v)$ is added to $C_v$: $\NL(v)\subset C_v$. The only two cases in which we create a new cluster $C_v$ is when it is the first cluster created, and when $\textsf{dist}(v)=2$. In the latter case, since $\textsf{dist}(v)=2$, $\NL(v)\cap V_{\textnormal{\textsf{allotted}}} =\emptyset$ and therefore $\{v\}\cup \NL(v)$ is disjoint from $V_{\textnormal{\textsf{allotted}}}$, the set of already allotted vertices. Thus, at the beginning of Step~\ref{st:a:c:11}, $C_v$'s are disjoint. No new clusters are created in the rest of the algorithm and only the $L$-nodes that has not been allotted are added to exactly one of the existing clusters. Hence, Property~\eqref{p:l:cl:1} holds.

Property~\eqref{p:l:cl:2} is easily verified, since \mbox{Steps~\ref{st:a:c:12}-\ref{st:a:c:15}} ensure that every $L$-node is allotted, and the only case a 0-node is allotted is at Step~\ref{st:a:c:8}, where the cluster midpoint $v$ is allotted.

At Step~\ref{st:a:c:10}, if $v$ is a 0-node, $\NN(v)\subset V_L$ since every edge is incident to at least one $L$-node; Property~\eqref{p:l:cl:3} follows from this observation.

Until Step~\ref{st:a:c:11} of the algorithm, vertices are allotted only when it is a cluster midpoint or in $\NL(v)$ for some cluster midpoint $v$. Thus, $u^*\in V_{\textnormal{\textsf{allotted}}}$ chosen at Step~\ref{st:a:c:7} is either a cluster midpoint or in $\NL(v)$ for some $C_v$. Suppose $u^*$ is a cluster midpoint. If $u^*$ is an $L$-node, then $u^*\in \NL(u^*)$; suppose $u^*$ is a 0-node. As $d_G(u^*,v)=2$ from the choice of $u^*$, there exists a vertex $z$ that is in both $\NN(u^*)$ and $\NN(v)$. $z$ is an $L$-node since $u^*$ is a 0-node. Thus $z$ is in $\NL(u^*)$ and has to be in $V_{\textnormal{\textsf{allotted}}}$, contradicting $\textsf{dist}(v)=2$. Hence, in any case, $\pi_1(v)\in \NL(x)$ for some $C_x$. At Step~\ref{st:a:c:13}, $u\in V^*_{\textnormal{\textsf{allotted}}}$ and hence either $u$ is a cluster midpoint or $u\in \NL(y)$ for some $C_y$. If $u$ is a cluster midpoint, $v\in \NL(u)$, contradicting $v\notin V^*_{\textnormal{\textsf{allotted}}}$. Property~\eqref{p:l:cl:range} is verified.

At the beginning of Step~\ref{st:a:c:11}, for every $C_v$, $C_v=\{v\}\cup \NL(v)$ from construction and it can be only augmented in the rest of the algorithm. When $v$ is added to a cluster at Step~\ref{st:a:c:14}, it is added to $C_{\alpha(\pi_2(v))}$ and hence\[
C_v=\begin{cases}
\NL(v)\cup\{u\mid\pi_2(u)\in \{v\}\cup \NL(v)\},&\textrm{if }v\textrm{ is a }L\textrm{-node};\\
\{v\}\cup \NL(v)\cup\{u\mid\pi_2(u)\in \{v\}\cup \NL(v)\},&\textrm{if }v\textrm{ is a 0-node}.
\end{cases}
\]Now Property~\eqref{p:l:cl:part} follows from Property~\eqref{p:l:cl:range}.
\end{proof}

\begin{observation}\label{o:pip}
For every non-root cluster $C_v$, the distance between $\pi_1(v)$ and $p(v)$ is\[
\begin{cases}
1,&\textrm{if }v\textrm{ is a 0-node};\\
2,&otherwise.
\end{cases}\]
\end{observation}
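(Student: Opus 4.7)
The plan is to unpack the definitions from Algorithm~\ref{a:cluster} carefully for a non-root cluster $C_v$. A non-root cluster is only created inside the \textbf{while} loop, so when its midpoint $v$ was introduced at Step~\ref{st:a:c:6} we had $\textsf{dist}(v)=2$, and at Step~\ref{st:a:c:7}, $\pi_1(v)=u^*$ was chosen with $d_G(u^*,v)=2$. I would begin by isolating this single fact: $d_G(\pi_1(v), v) = 2$, which is what drives both cases.

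I would then split according to the type of $v$, exactly as in the \textbf{if}--\textbf{else} of Step~\ref{st:a:c:10}. If $v$ is an $L$-node, then $p(v) = v$ and the distance is immediately $d_G(\pi_1(v), p(v)) = d_G(\pi_1(v), v) = 2$, giving the second case. If $v$ is a 0-node, then $p(v)$ is chosen from $\NN(v) \cap \NN(u^*)$. Note this set is nonempty precisely because $d_G(u^*,v)=2$, so there is an intermediate vertex on a shortest path. Then $p(v) \in \NN(u^*) = \NN(\pi_1(v))$ gives $d_G(\pi_1(v), p(v)) \leq 1$, and the strict inequality $p(v) \neq \pi_1(v)$ follows since $p(v) \in \NN(v)$ implies $d_G(p(v), v) \leq 1 < 2 = d_G(\pi_1(v), v)$. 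Hence the distance is exactly $1$.

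There is no real obstacle here: the argument is a direct chase through the definitions, and the only thing to double-check is that the choices at Steps~\ref{st:a:c:7} and \ref{st:a:c:10} are well-defined, which was already established in the proof of Lemma~\ref{l:cl}. Overall the observation just records the two-step geometric picture built into the clustering: non-root midpoints are planted at distance exactly $2$ from the allotted set, and when the midpoint is a 0-node we retract the ``anchor'' $p(v)$ one hop closer to $\pi_1(v)$ to land on an $L$-node in the common neighborhood.
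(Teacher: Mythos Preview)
Your proof is correct and follows the same approach as the paper: establish $d_G(\pi_1(v),v)=2$ from Step~\ref{st:a:c:7}, then case-split on Step~\ref{st:a:c:10}. You actually go slightly further than the paper in the 0-node case by explicitly ruling out $p(v)=\pi_1(v)$ to pin down the distance as exactly $1$ rather than $\leq 1$; the paper's own proof simply notes $p(v)\in\NN(\pi_1(v))$ and stops there.
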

\begin{proof}
Note that $\pi_1(v)$ and $v$ are at distance 2 as can be seen from Step~\ref{st:a:c:7} of Algorithm~\ref{a:cluster}; thus, if $v$ is an $L$-node, $p(v)=v$ and the distance between $\pi_1(v)$ and $p(v)$ is two. If $v$ is a 0-node, $p(v)$ is chosen from $\NN(\pi_1(v))$ at Step~\ref{st:a:c:10} of Algorithm~\ref{a:cluster}.
\end{proof}

\paragraph{Rounding opening variables.}
Our algorithm will gradually round the opening variables $y$, starting from the original LP solution, until they become integral. This process will be described in terms of \emph{opening movements}, where each movement specifies how much opening is moved from which $L$-node to which $L$-node. Since the $L$-nodes have the same capacities, if we show that a set of opening movements makes the opening variables integral while no opening is moved by the net distance of more than $r$, this implies that the resulting set of opening variables is an integral distance-$r$ transfer.

Our rounding procedure begins with changing $y_{p(v)}$ of every cluster $C_v$ to one: for each cluster $C_v$, we increase $y_{p(v)}$ until it reaches one, while simultaneously decreasing the opening variable of a vertex in $\NL(v)$ by the same amount. This \emph{initial aggregation} can be interpreted as opening movements, and keeps the budget constraint $\sum_{v\in V}y_v=k$ satisfied.

\begin{observation}\label{o:pre}
For each cluster $C_v$, the initial aggregation can be implemented by a set of opening movements within the distance of\[
\begin{cases}
1,&\textrm{if }p(v)=v;\\
2,&otherwise.
\end{cases}\]
\end{observation}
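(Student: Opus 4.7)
The plan is to do a routine case analysis based on the two branches of Step~\ref{st:a:c:10} of Algorithm~\ref{a:cluster}, which dictate how $p(v)$ is chosen. First I would note the structure of the aggregation itself: by the description preceding the observation, every opening movement associated with cluster $C_v$ has destination $p(v)$ and source some vertex $u \in \NL(v)$. The feasibility of performing enough such movements to bring $y_{p(v)}$ up to $1$ follows from the LP constraints, which give $\sum_{u \in \NN(v)} y_u \geq \sum_{u \in \NN(v)} x_{uv} = 1$; since $0$-nodes have capacity $0$ and therefore force $x_{uv} = 0$, this sum is in fact supported on $\NL(v)$. Moreover, since $\NL(v) \subseteq C_v$ and clusters are disjoint by Lemma~\ref{l:cl}(i), the aggregations for distinct clusters operate on disjoint vertex sets and cannot interfere.

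With that in hand, the task reduces to bounding $d_G(u, p(v))$ for $u \in \NL(v)$ in each of the two cases. If $p(v) = v$, then by Step~\ref{st:a:c:10} the vertex $v$ must be an $L$-node, and every $u \in \NL(v)$ lies within distance $1$ of $v = p(v)$ by definition of the neighborhood, yielding the claimed bound of $1$. If instead $p(v) \neq v$, then again by Step~\ref{st:a:c:10} the vertex $v$ is a $0$-node and $p(v)$ is chosen from $\NN(v) \cap \NN(\pi_1(v))$; in particular $p(v)$ is adjacent to $v$. Applying the triangle inequality through $v$ then gives $d_G(u, p(v)) \leq d_G(u, v) + d_G(v, p(v)) \leq 1 + 1 = 2$ for any $u \in \NL(v)$, matching the claim.

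There is no real obstacle here: both bounds are tight consequences of the neighborhood structure, and the nontrivial work (guaranteeing that $p(v) \in \NL(v)$ is adjacent to $v$ whenever $v$ is a $0$-node, and that clusters are disjoint) has already been carried out in Lemma~\ref{l:cl}. The only thing worth being careful about is that the source vertices for the aggregation genuinely lie in $\NL(v)$ rather than in a larger neighborhood; this is handled by the feasibility argument above, which exploits the $\zerol$-structure to restrict the LP mass to $L$-nodes. Once these pieces are in place, Observation~\ref{o:pre} follows immediately.
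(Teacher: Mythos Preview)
Your proposal is correct and follows essentially the same approach as the paper's own proof: both argue that the aggregation draws only from $\NL(v)$ (using disjointness of the $\NL(v)$'s and the LP lower bound $y(\NL(v))\geq 1$), and both derive the distance bounds from the fact that $p(v)\in\NL(v)$. You spell out the triangle-inequality step and the case split on Step~\ref{st:a:c:10} more explicitly than the paper does, but the argument is the same. One tiny nit: your appeal to Step~\ref{st:a:c:10} for the case $p(v)=v$ technically omits the root cluster, where $p(v)=v$ is set at Step~\ref{st:a:c:4}; this does not affect correctness since the root midpoint is an $L$-node by Step~\ref{st:a:c:2}.
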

\begin{proof}
$\NL(v)$'s are disjoint from Lemma~\ref{l:cl}, and $y(\NL(v))\geq 1$ from the LP constraints; hence, $y_{p(v)}$ can be made 1 via movements from $\NL(v)$. Note that $p(v)\in \NL(v)$ in any case.
\end{proof}

After the initial aggregation, the procedure considers each cluster $C_v$ in the bottom-up order and make the opening variables of every vertex in $C_v\setminus\{p(v)\}$ integral, using movements of distance 5 or smaller; $p(v)$ is propagated to the parent cluster, to be taken into account when that cluster is rounded. Precisely, the rounding procedure for $C_v$ rounds the opening variables of $I_v:=V_L \cap (C_v\setminus\{p(v)\}\cup\{p(u)\mid\pi_1(u)\in C_v\})$, i.e., the set of $L$-nodes that is either propagated from a child cluster or originally in $C_v$, except the vertex to be propagated \emph{from} $C_v$.

Algorithm~\ref{a:rounding} shows the procedure. First it recursively processes the children clusters, and then constructs a family of vertex sets $\{X_u\}_{u\in \NL(v)}$ indexed by $\NL(v)$. For $u\neq p(v)$, $X_u$ consists of $u$ itself, vertices propagated from the children clusters that are connected through $u$, and the vertices in $C_v$ that are connected to $v$ through $u$: $X_u:=\{u\}\cup\{p(w)\mid\pi_1(w)=u\}\cup\{w\mid\pi_2(w)=u\}$. $X_{p(v)}$ is similarly defined, except that it does not contain $p(v)$. Now for every $u\in \NL(v)$, we locally round $X_u$: we choose a set $W_u$ of the vertices to be opened, and move the openings of the other vertices to the vertices in $W_u$. Note that \textsc{LocalRound}($V_{\textnormal{\textsf{toOpen}}}$, $V_{\textnormal{\textsf{moveFrom1}}}$, $V_{\textnormal{\textsf{moveFrom2}}}$) is a procedure that increases the opening variables of the vertices in $V_{\textnormal{\textsf{toOpen}}}$ to one, while decreasing the opening variables of $V_{\textnormal{\textsf{moveFrom1}}}$ (and $V_{\textnormal{\textsf{moveFrom2}}}$ if $V_{\textnormal{\textsf{moveFrom1}}}$ is used up) to match the increase. $W_u$ is chosen as a subset of $X_u$, but we avoid choosing $u\in \NL(v)$ whenever possible. After these local roundings, each $X_u$ may still have some non-integral opening variables remaining; we choose a set $F\subset \NL(v)\setminus\{p(v)\}$ to accomodate these openings. Finally, if there still remains some fraction, we choose one last center $w^*$, and open it using the opening movements from $\bar I_{\textsf{Step\ref{st:a:r:12}}}$ and $\{p(v)\}$. Note that $y_{p(v)}$, therefore, may become less than one at the termination of \textsc{Round}($v$).

\begin{algorithm}
\caption{Rounding algorithm.}
\label{a:rounding}
\begin{algorithmic}[1]
	\Procedure{Round}{$v$}
		\ForAll {children clusters $C_w$} \Call {Round}{$w$} \EndFor\label{st:a:r:2}
		\State $X_{p(v)}\gets \{p(w)\mid\pi_1(w)=p(v)\}\cup\{w\mid\pi_2(w)=p(v)\}$\label{st:a:r:3}
		\State $X_u\gets \{u\}\cup\{p(w)\mid\pi_1(w)=u\}\cup\{w\mid\pi_2(w)=u\}$ \textbf{for all} $u\in \NL(v)\setminus\{p(v)\}$\label{st:a:r:4}
		\ForAll {$u\in \NL(v)$}\label{st:a:r:5}
		  \State Choose $\lfloor y(X_u)\rfloor$ vertices from $X_u$; call it $W_u$ (avoid choosing $u$ unless $|X_u|=y(X_u)$)\label{st:a:r:6}
			\State \Call{LocalRound}{$W_u$, $X_u$, $\emptyset$}\label{st:a:r:7}
		\EndFor
		\State Let $\bar I_{\textsf{Step\ref{st:a:r:8}}}$ be the set of vertices in $\cup_{u\in \NL(v)}X_u$ that have non-integral opening variables\label{st:a:r:8}
		\State $F:=\{u\in \NL(v)\setminus \{p(v)\}\mid y_u<1 \}$\label{st:a:r:9}
		\State Choose $\lfloor \sum_{u\in \bar I_{\textsf{Step\ref{st:a:r:8}}}} y_u \rfloor$ vertices from $F$; call it $W_F$\label{st:a:r:10}
		\State \Call{LocalRound}{$W_F$, $\bar I_{\textsf{Step\ref{st:a:r:8}}}\setminus X_{p(v)}$, $\bar I_{\textsf{Step\ref{st:a:r:8}}}\cap X_{p(v)}$}\label{st:a:r:11}
		\State Let $\bar I_{\textsf{Step\ref{st:a:r:12}}}$ be the set of vertices in $\cup_{u\in \NL(v)}X_u$ that have non-integral opening variables\label{st:a:r:12}
		\If {$\bar I_{\textsf{Step\ref{st:a:r:12}}}\neq\emptyset$}\label{st:a:r:13}
			\State Choose $w^*$ from $F\setminus W_F$ if $F\setminus W_F\neq\emptyset$; otherwise choose from $\bar I_{\textsf{Step\ref{st:a:r:12}}}$\label{st:a:r:14}
			\State \Call{LocalRound}{$\{w^*\}$, $\bar I_{\textsf{Step\ref{st:a:r:12}}}$, $\{p(v)\}$}\label{st:a:r:15}
		\EndIf
	\EndProcedure\vspace{1ex}
	\Procedure{LocalRound}{$V_{\textnormal{\textsf{toOpen}}}$, $V_{\textnormal{\textsf{moveFrom1}}}$, $V_{\textnormal{\textsf{moveFrom2}}}$}
		\While {$\exists u\in V_{\textnormal{\textsf{toOpen}}} \  y_u<1$}
			\State Choose a vertex $w$ with nonzero opening from $V_{\textnormal{\textsf{moveFrom1}}}\setminus V_{\textnormal{\textsf{toOpen}}}$;\label{st:a:lr:3}
			\Statex \hspace{2.7em} if there exists none, choose from $V_{\textnormal{\textsf{moveFrom2}}}\setminus V_{\textnormal{\textsf{toOpen}}}$
			\State $\Delta\gets \min(1-y_u,y_w)$; increase $y_u$ by $\Delta$ and decrease $y_w$ by $\Delta$
		\EndWhile
	\EndProcedure
\end{algorithmic}
\end{algorithm}

\begin{lemma}\label{l:6key}
Suppose that $y_{p(v)}=1$ before Step~\ref{st:a:r:3} of \textnormal{\textsc{Round}($v$)}. Then Steps~\ref{st:a:r:3}-\ref{st:a:r:15} of \textnormal{\textsc{Round}($v$)} make the opening variables of $I_v$ integral, and this can be implemented by a set of opening movements within $I_v\cup\{p(v)\}$, with no incoming movements to $p(v)$. The maximum distance of these movements is five taking the initial aggregation into account.
\end{lemma}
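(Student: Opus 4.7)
The plan is to verify four properties of Steps~\ref{st:a:r:3}--\ref{st:a:r:15}: (i) the family $\{X_u\}_{u \in \NL(v)}$ partitions $I_v \cup \{p(v)\}$, with $p(v)$ in no $X_u$; (ii) the opening variables of $I_v$ become integral; (iii) all movements stay in $I_v \cup \{p(v)\}$ and none are incoming to $p(v)$; (iv) when composed with the initial aggregation at $C_v$ and at each child cluster, every movement has net distance at most $5$.

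For (i), I would combine Lemma~\ref{l:cl}(iv)--(v): part (v) expresses $C_v$ as $\NL(v) \cup \{w : \pi_2(w) \in \NL(v)\}$ (plus the $0$-node $v$ itself when applicable, which lies outside $I_v$), and part (iv) together with disjointness of the $\NL$'s forces any propagated $p(w')$ with $\pi_1(w') \in C_v$ to satisfy $\pi_1(w') \in \NL(v)$. Hence each element of $I_v \cup \{p(v)\}$ lies in exactly one $X_u$, and $p(v)$ lies in none by the definitions in Steps~\ref{st:a:r:3}--\ref{st:a:r:4}.

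For (ii) and (iii) I would trace the three LocalRound invocations. Each LocalRound($W_u, X_u, \emptyset$) at Step~\ref{st:a:r:7} fully opens $\lfloor y(X_u)\rfloor$ vertices of $X_u$, leaves at most one fractional residual of value $y(X_u) - \lfloor y(X_u)\rfloor$, and moves only inside $X_u \subset I_v$. The set $\bar I_{\textsf{Step\ref{st:a:r:8}}}$ collects these residuals, one per $X_u$ with non-integer $y(X_u)$; the avoidance rule at Step~\ref{st:a:r:6} keeps enough vertices of $\NL(v) \setminus \{p(v)\}$ unopened (and hence in $F$) so that $|F| \geq \lfloor \sum_{u \in \bar I_{\textsf{Step\ref{st:a:r:8}}}} y_u \rfloor$ at Step~\ref{st:a:r:10}. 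Step~\ref{st:a:r:11} then drains $\bar I_{\textsf{Step\ref{st:a:r:8}}}$ into $W_F \subset I_v$, leaving at most one leftover in $\bar I_{\textsf{Step\ref{st:a:r:12}}}$, which Step~\ref{st:a:r:15} absorbs, pulling from $p(v)$ only if needed. Since $p(v)$ enters LocalRound exclusively as $V_{\textnormal{\textsf{moveFrom2}}}$ and never as $V_{\textnormal{\textsf{toOpen}}}$, no movement is incoming to $p(v)$; and budget conservation across all three LocalRound calls yields integrality on $I_v$.

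The main obstacle is (iv). Every vertex of $X_u$ lies within distance $2$ of $u$: it is either $u$ itself, or a propagated $p(w')$ with $\pi_1(w') = u$ (distance $\le 2$ by Observation~\ref{o:pip}), or a $w$ with $\pi_2(w) = u$ (distance $1$). So each $X_u$ has diameter at most $4$, handling Step~\ref{st:a:r:7} in isolation, and the destinations of Steps~\ref{st:a:r:11}/\ref{st:a:r:15} either lie in $\NL(v)$ (when in $F$) or within distance $2$ of some $u' \in \NL(v)$ (when a $\bar I_{\textsf{Step\ref{st:a:r:12}}}$-vertex is chosen as $w^*$). The delicate case is a movement whose source has itself received opening during a child's initial aggregation at $C_{w'}$ from some $z \in \NL(w') \setminus \{p(w')\}$: routing through $p(w')$ yields only the bound $6$, but routing through $w'$ saves one unit, using $d(z, w') \le 1$ and $d(w', \pi_1(w')) = 2$ from Step~\ref{st:a:c:7} of Algorithm~\ref{a:cluster} to obtain $d(z, u) \le 3$, and then $d(u, \text{destination}) \le 2$ to conclude $d(z, \text{destination}) \le 5$. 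Movements tied to the $C_v$-side initial aggregation stay within $\NL(v)$-neighborhoods of $v$ and hence trivially satisfy the bound. The crux is therefore that one must trace each unit of opening back to its true geometric source inside the child cluster, not to the intermediate hub $p(w')$.
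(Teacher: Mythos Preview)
Your outline follows the paper's approach closely, and parts (i)--(iii) are essentially correct (modulo the slightly garbled phrasing ``$\{X_u\}$ partitions $I_v \cup \{p(v)\}$, with $p(v)$ in no $X_u$'' --- you mean that $\{X_u\}$ partitions $I_v$ and misses $p(v)$). The insight in (iv) about routing through the child midpoint $w'$ rather than through $p(w')$ to obtain $d(z,u)\le 3$ is exactly right and matches the paper.

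However, there is a genuine gap in your distance analysis of Step~\ref{st:a:r:15}. You assert ``$d(u,\text{destination})\le 2$'' uniformly, but this is not justified in the case where $F\setminus W_F=\emptyset$ and $w^*$ is chosen from $\bar I_{\textsf{Step\ref{st:a:r:12}}}$. In that case $w^*$ lies in some $X_{u'}$ at distance $\le 2$ from $u'$, while the source lies in some $X_u$; a priori $u\neq u'$, and the triangle inequality through $v$ only gives $d(u,w^*)\le d(u,v)+d(v,u')+d(u',w^*)\le 1+1+2=4$. Combined with $d(z,u)\le 3$ this yields only $7$, not $5$. The paper closes this gap with an argument you have omitted: when $F\setminus W_F=\emptyset$, one has $|W_F|=|F|>\sum_{x\in \bar I_{\textsf{Step\ref{st:a:r:8}}}\setminus X_{p(v)}} y_x$ (since the right side is a sum of fewer than $|F|$ terms each strictly below $1$), and since $W_F\cap \bar I_{\textsf{Step\ref{st:a:r:8}}}\subset \bar I_{\textsf{Step\ref{st:a:r:8}}}\setminus X_{p(v)}$, the first source pool $\bar I_{\textsf{Step\ref{st:a:r:8}}}\setminus X_{p(v)}$ is entirely drained during Step~\ref{st:a:r:11}. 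Hence $\bar I_{\textsf{Step\ref{st:a:r:12}}}\subset X_{p(v)}$, forcing both $w^*$ and every remaining source to sit in $X_{p(v)}$; now $u=u'=p(v)$ and your bound goes through. This is precisely why Step~\ref{st:a:r:11} prioritizes $\bar I_{\textsf{Step\ref{st:a:r:8}}}\setminus X_{p(v)}$ over $\bar I_{\textsf{Step\ref{st:a:r:8}}}\cap X_{p(v)}$ --- a detail your argument does not exploit. You also need to handle separately the movement from $p(v)$ itself to $w^*$ in Step~\ref{st:a:r:15}, tracing that opening back to $\NL(v)$ via Observation~\ref{o:pre}.
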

\begin{proof}
Note that, from Properties~\eqref{p:l:cl:3}, \eqref{p:l:cl:range}, and \eqref{p:l:cl:part} of Lemma~\ref{l:cl}, $\{X_u\}_{u\in \NL(v)}$ forms a partition of $I_v$. Thus it suffices to verify that the opening variables of each $X_u$ becomes integral. Also note that $X_u\subset V_L$.

At Step~\ref{st:a:r:6} of the algorithm, we have $y(X_u)\leq|X_u|$ from $y\leq\textbf{1}$; hence $W_u$ can be successfully chosen. After Step~\ref{st:a:r:7}, $X_u$ may still have some non-integral opening variables, but their total opening is given by $r_u:=y(X_u)-\lfloor y(X_u)\rfloor <1$. Moreover, when $r_u>0$, we have $u\notin W_u$ and therefore $y_u<1$. Thus, $\sum_{u\in\bar I_{\textsf{Step\ref{st:a:r:8}}}}y_u=r_{p(v)}+\sum_{u\in \NL(v)\setminus\{p(v)\}} r_u<1+|F|$ at Step~\ref{st:a:r:10} and therefore $W_F$ can be chosen as well. Note that $(\bar I_{\textsf{Step\ref{st:a:r:8}}}\setminus X_{p(v)})\cup(\bar I_{\textsf{Step\ref{st:a:r:8}}}\cap X_{p(v)})=\bar I_{\textsf{Step\ref{st:a:r:8}}}$ and hence Step~\ref{st:a:lr:3} of \mbox{\textsc{LocalRound}} called from Step~\ref{st:a:r:11} will always succeed. After Step~\ref{st:a:r:11}, the total non-integral opening variables in $I_v$ will become strictly smaller than one. If $\bar I_{\textsf{Step\ref{st:a:r:12}}}=\emptyset$, we are done. Otherwise, $u$ can be successfully chosen since $\bar I_{\textsf{Step\ref{st:a:r:12}}}\neq\emptyset$, and Step~\ref{st:a:r:15} will make the opening variables of $I_v$ completely integral, while making $y_{p(v)}$ smaller than one. Note that $y_{p(v)}=1$ before Step~\ref{st:a:r:15}.

Now it remains to verify that this rounding can be realized in terms of opening movements within the distance of five. When $x\in X_u$, one of the following holds: \ite{i} $x=u$, \ite{ii} $x=p(w)$ and $\pi_1(w)=u$, or \ite{iii} $\pi_2(x)=u$. In Case~\ite{ii}, $d_G(x,u)\leq 2$ from Observation~\ref{o:pip}. In Case~\ite{iii}, $d_G(x,u)=1$ as can be seen from Step~\ref{st:a:c:13} of Algorithm~\ref{a:cluster}. Thus, for any $x\in X_u$, $x$ is within the distance of 2 from $u$. Note that the \emph{opening} at $x\in X_u$ has been moved from $\NL(w)$ if $x=p(w)$; otherwise, it originates from $x$ itself. From Observations~\ref{o:pip} and \ref{o:pre}, the opening at $x$, in Case~\ite{ii}, originates from vertices within the distance of three from $u$; in the other cases, it is from $x$ itself and therefore within the distance of one. Thus, any movements resulting from Step~\ref{st:a:r:7} of Algorithm~\ref{a:rounding} moves opening that originally comes from vertices within the distance of three from $u$ to a vertex within the distance of two from $u$; the maximum distance of these movements therefore is five.

Since the opening at $x\in X_u$ originates from vertices within the distance of three from $u$, it is within the distance of four from $v$. On the other hand, every vertex in $F$ is within the distance of one from $v$; therefore, the maximum distance of movements resulting from Step~\ref{st:a:r:11} also is five.

Suppose $w^*$ is chosen from $F\setminus W_F$ at Step~\ref{st:a:r:14}. Then $w^*$ is within the distance of one from $v$; as observed earlier, the opening at $x\in\bar I_{\textsf{Step\ref{st:a:r:12}}}$ originates from vertices within the distance of four from $v$. $p(v)$ is within the distance of one from $v$, and its opening originates from vertices within the distance of two from $p(v)$ (see Observation~\ref{o:pre}); hence, the opening at $p(v)$ originates from vertices within the distance of three from $v$. Thus, in this case, any movements resulting from Step~\ref{st:a:r:15} moves opening that originates from vertices within the distance of four from $v$ to a vertex within the distance of one from $v$; the maximum distance of these movements therefore is five.

Suppose $F\setminus W_F=\emptyset$. In this case, $\sum_{u\in\bar I_{\textsf{Step\ref{st:a:r:8}}}\setminus X_{p(v)}} y_u<|F|=|W_F|$ and $(W_F \cap \bar I_{\textsf{Step\ref{st:a:r:8}}}) \subset (I_{\textsf{Step\ref{st:a:r:8}}}\setminus X_{p(v)})$; hence, $\bar I_{\textsf{Step\ref{st:a:r:8}}}\setminus X_{p(v)}$ is used up during \mbox{\textsc{LocalRound}} called from Step~\ref{st:a:r:11}. Therefore, we have $\bar I_{\textsf{Step\ref{st:a:r:12}}}\subset X_{p(v)}$. As observed earlier, $w^*\in X_{p(v)}$ is within the distance of two from $p(v)$; the opening at $x\in X_{p(v)}$ is from vertices within the distance of three from $p(v)$. The opening at $p(v)$ is from vertices within the distance of two from $p(v)$, as was seen in Observation~\ref{o:pre}. Thus, the maximum distance of movements resulting from Step~\ref{st:a:r:15} is five in this case as well.
\end{proof}

\begin{proof}[Proof of Theorem~\ref{t:zerolmain}]
Let $C_r$ be the root cluster, and we excute \textsc{Round}($r$) on the LP solution.

From Lemma~\ref{l:6key}, \textsc{Round}($r$) outputs a set of opening variables that can be realized by a set of opening movements of distance five or smaller: note that $y_{p(v)}=1$ before Step~\ref{st:a:r:3} of \textnormal{\textsc{Round}($v$)}, since we process the clusters in the bottom-up order. As every vertex in $V_L\setminus\{p(r)\}$ is in $I_v$ for some cluster $C_v$,  their opening variables are made integral. Since $\sum_{v\in V}y_v=k$, $y_{p(r)}$ is also made integral, and the opening movements to $p(r)$ during the initial aggregation were within the distance of one. Thus the output set of open vertices is an integral distance-5 transfer.

Now Lemma~\ref{lem:framework-main} completes the proof.
\end{proof}

\end{document}